\documentclass[11pt]{article}
\usepackage[margin=1.1in]{geometry}
\usepackage{amsmath,amssymb,amsthm,mathtools}
\usepackage{booktabs,array}
\usepackage{graphicx}
\usepackage{tikz}
\usetikzlibrary{positioning}
\usepackage{pgfplots}
\pgfplotsset{compat=1.18}
\usepackage[colorlinks=true,linkcolor=blue,citecolor=blue,urlcolor=blue]{hyperref}
\usepackage{xcolor}
\usepackage{placeins}

\newtheorem{theorem}{Theorem}
\newtheorem{lemma}[theorem]{Lemma}
\newtheorem{proposition}[theorem]{Proposition}
\newtheorem{corollary}[theorem]{Corollary}
\newtheorem{definition}[theorem]{Definition}
\newtheorem{remark}[theorem]{Remark}

\newcommand{\PC}{\mathrm{PC}}
\newcommand{\LCC}{\mathrm{LCC}}
\newcommand{\NC}{\mathrm{NC}}
\newcommand{\dam}{\mathrm{dam}}
\newcommand{\Top}{\mathrm{Top}}

\newcommand{\tautr}{\tau^{+}}

\title{Objective Transfer in Single-Node Network Criticality:\\
Optimal Constants and Exact Separation Thresholds}

\author{Onur U\u{g}urlu\thanks{Department of Computer Engineering, Izmir Bakircay University, Izmir, T\"urkiye. ORCID: 0000-0003-2743-5939. E-mail: \texttt{onur.ugurlu@bakircay.edu.tr}.}}
\date{September 2026}

\begin{document}
\maketitle

\begin{abstract}
When a vertex is selected as the most damaging single attack target under
one damage objective, how much of the optimal damage under a different
objective does it retain? We study this question for the three objectives
commonly used in critical node detection --- pairwise connectivity ($\PC$),
size of the largest surviving component ($\LCC$) and number of components
($\NC$) --- for single-vertex attacks on connected graphs, with the
attacked vertex counted in the damage and ties resolved optimistically.
For the pair $\PC$, $\LCC$ the loss is bounded by constants: a
$\PC$-optimal vertex retains at least a $2-\sqrt2$ fraction of the optimal
$\LCC$ damage, an $\LCC$-optimal vertex at least $2/3$ of the optimal
$\PC$ damage, and both constants are best possible. Neither bound depends
on the choice among tied optimal vertices or on whether the attacked
vertex is counted, and both remain valid as lower bounds for attack sets
of any fixed size. For the four directions involving $\NC$ no positive
constant exists; explicit families have ratios decaying like $1/k$, and no
single attack set retains a positive fraction of all three optima
uniformly. We also determine the smallest orders at which the sets of
optimal vertices become disjoint: $7$ for $\LCC/\NC$, $8$ for $\PC/\NC$,
$9$ for $\PC/\LCC$ and $11$ for all three pairwise, with trees realising
the triple separation for every $n\ge11$; these values rest on an
exhaustive enumeration of the $11{,}989{,}762$ connected graphs with
$3\le n\le10$. Finally, the most critical vertex in a Birnbaum-type sense
may depend on the failure probability: a six-vertex graph, the smallest
possible order, changes leader exactly once, at $p^*=3-\sqrt5$. Adding
universal vertices lifts the budget-one examples to every fixed budget, so
over all connected graphs the same constants are optimal at every fixed
budget; the behaviour within restricted classes such as trees is left
open. 
\end{abstract}

% =====================================================================
\section{Introduction}\label{sec:intro}

Identifying the most critical vertex of a network is a basic step in
protection budgeting, interdiction and repair prioritisation. The word
``critical'' presupposes an objective. In the critical node detection
literature the damage caused by removing a set of vertices is measured,
depending on the application, by the number of disconnected pairs, by
the size of the largest surviving component, or by the number of
components, and the three optimisation problems have mostly been treated
separately \cite{arulselvan2009cndp,lalou2018survey,walteros2019cnp}.
Bi-objective formulations are known to differ from their single-objective
projections \cite{ventresca2018biobjective}, some algorithmic frameworks
handle several objectives side by side \cite{aringhieri2016framework},
and the connected variant has been studied for all three measures within
one model \cite{hosteins2022connected}. What appears to be missing is a
bound on the damage lost when a vertex computed as optimal for one
objective is evaluated under another. The comparisons we are aware of
study the objectives jointly, or compare their optimal solutions and
related rankings empirically on real-world networks
\cite{ugurlu2022centrality,ugurlu2025reliability}, without a directed
worst-case ratio. The question arises whenever a generic
criticality score is reused across applications, and it is the subject
of this paper.

We consider single-vertex attacks on connected graphs and three damage
functions: pairwise connectivity ($\PC$), largest surviving component
($\LCC$) and component count ($\NC$). For two objectives $u,w$ the
\emph{transfer ratio} of a graph is the fraction of the optimal
$w$-damage achieved by a $u$-optimal vertex. Two conventions are fixed
throughout and matter for the numerical values: damages count the
attacked vertex as lost (Definition~\ref{def:objectives}), and ties
within the source objective are resolved optimistically, in favour of
the best vertex among the tied optima (Definition~\ref{def:transfer}).
The optimistic choice makes negative results stronger; for the positive
results it turns out to be immaterial, since the same bounds hold for
every tied optimum.

The results are of three kinds. For the pair $\PC$, $\LCC$ the transfer
ratio is bounded below by a constant: at least $2-\sqrt2$ from a
$\PC$-optimal vertex to the $\LCC$ objective, and at least $2/3$ in the
opposite direction. Both constants are attained in the limit by explicit
families and are therefore best possible (Theorem~\ref{thm:main}). They
hold for every tied source optimum and are unchanged if damages are
measured on the survivors only (Theorem~\ref{thm:tauminus}), and as lower
bounds they carry over to attack sets of any fixed size
(Corollary~\ref{cor:budget}). For the four directions involving $\NC$
no such constant exists: along explicit families the ratio tends to zero
at rates such as $6k/(k^2+3k+1)$ and $1/k$, and no single attack set
retains a positive fraction of all three optima uniformly
(Proposition~\ref{prop:qsplit}). Adding universal vertices to the
budget-one examples shows that over all connected graphs this
classification is the same at every fixed budget
(Section~\ref{sec:apex}); what remains open is the behaviour inside
classes that the construction does not reach, such as trees for $b\ge2$.

The second kind of result concerns small graphs. The sets of optimal
vertices of $\LCC$ and $\NC$ first become disjoint at order $7$, those
of $\PC$ and $\NC$ at order $8$, those of $\PC$ and $\LCC$ at order $9$,
and all three become pairwise disjoint first at order $11$, on a tree
(Figure~\ref{fig:triple}); trees with three distinct optimal vertices
exist for every $n\ge11$ (Theorems~\ref{thm:thresholds}
and~\ref{thm:W}). The third result concerns the failure probability
rather than the objective: in a Birnbaum-type reliability setting the
most critical vertex of a six-vertex graph changes exactly once as the
failure probability varies, at $p^*=3-\sqrt5$, and no smaller graph
shows such a change (Theorem~\ref{thm:regime}).

The finite-size separation thresholds rest on exhaustive enumeration;
Section~\ref{sec:method} describes the exact computational methods and
summarises the computations supporting the results.

\subsection{Related work}\label{sec:related}

The critical node detection problem and its variants optimise pairwise
connectivity, the largest component or the number of components
directly \cite{arulselvan2009cndp,lalou2018survey,walteros2019cnp}, and
\cite{ventresca2018biobjective} shows that the bi-objective problem
differs from both projections. Early attack-vulnerability studies
already observed that different damage measures respond differently to
targeted removals \cite{holme2002attack}. Algorithms have been developed
for each objective separately, including distributed algorithms for
pairwise connectivity \cite{ugurlu2023adapc} and for the number of
components \cite{ugurlu2025maxnum}, and for the largest component and
the number of partitions restricted to articulation points
\cite{akram2023articulation}. The optimal solutions of the three
problems have been compared empirically on real-world networks
\cite{ugurlu2025reliability}, and centrality measures have been compared
as proxies for the component-count objective \cite{ugurlu2022centrality}.
We propose no optimisation algorithm; the object of study is the
worst-case behaviour of one objective's optimiser when evaluated under
another, which complements such empirical comparisons.

That component-importance rankings can change with the common
reliability parameter is known in reliability theory
\cite{zhu2012birnbaum,cai2013birnbaum}, severity-dependent rank
crossings have been observed in epidemic models \cite{qu2017severity},
and reliability polynomials of graphs may cross
\cite{colbourn1993reliability,kelmans2000crossing}. Relative to this
literature, Theorem~\ref{thm:regime} adds exact minimality and a
closed-form breakpoint. Axiomatic treatments characterise centrality
measures themselves \cite{boldi2014axioms}; here we compare the
\emph{decisions} induced by damage-based objectives, with conventions
that are insensitive to ties and automorphisms. Simultaneous
approximation --- one solution near-optimal for several measures at
once --- has been studied in other models
\cite{stein1997bicriteria,azar2002allnorm,goel2006simultaneous,goel2005majorization,schoot2025simultaneous};
Section~\ref{sec:qsplit} concerns the canonical network damages, where
the component count is not a norm. For the computational parts we follow
the practice of certifying algorithms \cite{mcconnell2011certifying}:
each reported result is accompanied by a witness that can be checked
independently.

% =====================================================================
\section{Definitions and conventions}\label{sec:defs}

\begin{definition}\label{def:objectives}
Let $G=(V,E)$ be a finite simple connected graph, $n=|V|\ge 3$, and let
$S\subseteq V$ with $|S|=b$, $1\le b\le n-1$, be an \emph{exact-cardinality
attack}. Write $\sigma(G-S)=(c_1\ge\dots\ge c_m)$ for the component profile.
The damages are
\[
\dam_{\PC}(S)=\tbinom n2-\textstyle\sum_i\tbinom{c_i}2,\qquad
\dam_{\LCC}(S)=n-c_1,\qquad
\dam_{\NC}(S)=\max\{m-1,0\},
\]
extended to arbitrary $S\subseteq V$ (as needed for the random attack sets
of Definition~\ref{def:regime}) with the convention $c_1=0$ and $m=0$ for
the empty remainder.
\end{definition}

\begin{definition}\label{def:transfer}
$\Top_u(G,b)=\arg\max_{|S|=b}\dam_u(S)$ (the set of \emph{all} maximisers;
for $b=1$ we identify the singleton $\{v\}$ with the vertex $v$).
Objectives $u,w$ \emph{disagree} on $(G,b)$ iff
$\Top_u\cap\Top_w=\emptyset$; since champion sets are unions of
automorphism orbits, this convention is both tie-safe and
automorphism-safe. For $M_w(G,b)=\max_{|S|=b}\dam_w(S)>0$, the
\emph{optimistic transfer ratio} is
\[
\tautr_{u\to w}(G,b)=\max_{S\in\Top_u(G,b)}\dam_w(S)/M_w(G,b)\in[0,1].
\]
A direction $u\to w$ is \emph{uniformly positive} if
$\inf_{G}\tautr_{u\to w}(G,b)\ge\rho$ for a constant $\rho>0$ independent of
$G$ and $n$, and \emph{vanishing} if the infimum is $0$. The
\emph{pessimistic} ratio $\tau^-_{u\to w}$ is defined with $\min$ in
place of $\max$ over the source champion set.
\end{definition}

\begin{definition}\label{def:regime}
For $p\in(0,1)$ let $F_{-v}\subseteq V\setminus\{v\}$ contain each vertex
independently with probability $p$. The \emph{Birnbaum-type signed marginal
effect} of $v$ under objective $u$ is
\[
C_{u,p}(v)=\mathbb E\bigl[\dam_u(F_{-v}\cup\{v\})\bigr]
          -\mathbb E\bigl[\dam_u(F_{-v})\bigr],
\]
computed exactly (for $u=\NC$ this quantity may be negative). The
\emph{regime champion set} is
$\Top^{\mathrm{reg}}_{u,p}(G)=\arg\max_{v\in V}C_{u,p}(v)$. All regime
statements below are proved via the exact polynomials $C_{u,p}(v)$
with isolated real roots; numerical grids are used for plotting only.
\end{definition}

\begin{remark}\label{rem:normalisation}
The damages of Definition~\ref{def:objectives} count the attacked vertices
as lost. The natural survivor-based alternatives are, for exact budget
$b$ and $N=n-b$ survivors, the constant shifts
$\dam'_\PC=\dam_\PC-\bigl[\binom n2-\binom N2\bigr]$ and
$\dam'_\LCC=\dam_\LCC-b$ (for $b=1$: $n-1$ and $1$).
Such shifts leave every champion set --- and hence all separation
thresholds of Theorem~\ref{thm:thresholds} --- unchanged, but they do
change the transfer \emph{ratios} and may create zero denominators. The
constants of Theorem~\ref{thm:main} are stated for the conventions of
Definition~\ref{def:objectives}; Theorem~\ref{thm:tauminus} shows that
under the survivor-based shifts the same two constants are optimal at
$b=1$, and Corollary~\ref{cor:budget} that both lower bounds persist at
every exact budget. For other functionals and normalisations nothing is
claimed.
\end{remark}

% =====================================================================
\section{Transfer ratios at budget one}\label{sec:classification}

\begin{theorem}\label{thm:main}
For $b=1$ the worst-case optimistic transfer ratios are exactly
\[
\begin{array}{c|ccc}
u\backslash w & \NC & \PC & \LCC\\\hline
\NC & - & 0 & 0\\
\PC & 0 & - & 2-\sqrt2\\
\LCC & 0 & 2/3 & -
\end{array}
\]
i.e.\ every direction involving $\NC$ is vanishing, with explicit witness
families of rate $\Theta(1/k)$, while
$\inf_G\tautr_{\PC\to\LCC}=2-\sqrt2$ and $\inf_G\tautr_{\LCC\to\PC}=2/3$,
both attained in the limit by explicit families.
\end{theorem}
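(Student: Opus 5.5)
The plan is to treat the two positive entries by a reduction to a continuous extremal problem on component profiles, and the four $\NC$ entries by explicit witness families; each constant is then matched by a family attaining it in the limit. At $b=1$, removing a vertex $x$ leaves a profile $\sigma(G-x)=(c_1\ge\dots\ge c_m)$ with $\sum c_i=n-1$, and
\[
\dam_\PC(x)=\tbinom n2-\textstyle\sum\tbinom{c_i}2,\qquad \dam_\LCC(x)=n-c_1 .
\]
Everything then reduces to comparing two profiles, those of a source-optimal vertex $v$ and a target-optimal vertex $w$. I will argue for an arbitrary source champion $v$, so that the lower bounds automatically hold pessimistically, as the paper later claims.

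For $\PC\to\LCC$, let $a=c_1(G-v)$ and $c=c_1(G-w)$, so the ratio is $(n-a)/(n-c)$. I will combine two constraints.
\begin{itemize}
\item[(i)] Optimality of $v$ gives $\tbinom a2\le\sum_i\tbinom{c_i(G-w)}2$.
\item[(ii)] The profile of $G-w$ is constrained by the structure around $v$. If $w\notin A$, where $A$ is the largest component of $G-v$, then $A$ survives in $G-w$, so $c\ge a$ and the ratio is at least $1$. Hence $w\in A$, and the component of $G-w$ containing $v$ contains all of $V\setminus A$, so it has size $s\ge n-a$. All other components of $G-w$ have size at most $c$ and total $n-1-s$.
\end{itemize}
Maximising $\sum\binom{c_i}2$ under (ii) (one part of size $s$, the rest packed into parts of size $c$), substituting into (i), and normalising $\alpha=a/n$, $\gamma=c/n$, $\sigma=s/n$ gives a two- or three-variable inequality. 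I then minimise $(1-\alpha)/(1-\gamma)$ subject to it. I expect the optimum to sit where the constraint $\sigma\ge1-\alpha$ and the quadratic constraint are simultaneously tight, with value $2-\sqrt2$. A crude version of this computation, ignoring the packing of the remainder, gives only $1/(1+\alpha)$ at the golden ratio. So the exact packing term (the $n-1-s$ remainder split into $\lfloor\cdot/c\rfloor$ blocks) must be kept, and the $2-\sqrt2$ should come from it.

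The direction $\LCC\to\PC$ is analogous. Now $c_1(G-v)\le c_1(G-w)$, and I bound
\[
\frac{\dam_\PC(v)}{\dam_\PC(w)}
\]
from below. The numerator is at least $\binom n2-\tbinom{c_1}2-\tbinom{n-1-c_1}{2}$, since the profile is dominated when all remaining mass sits in a second block. For the denominator, I apply the same structural argument as above: either $w$'s profile is forced to contain a block of size $\ge n-c_1(G-w)$ related to $v$'s, or $w$ lies outside $v$'s largest component. Then I optimise over the normalised parameter. I expect the worst case to be a balanced split for $v$ versus many small pieces for $w$, with $c_1$ near $n/3$ giving $2/3$.

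Matching constructions come next. For each positive constant I read off the extremal proportions $(\alpha,\gamma,\sigma)$ from the optimisation and realise them by a graph in which $v$ is a cut vertex separating a clique-like block of size $\approx\alpha n$ from the rest. Inside that block a vertex $w$ shatters it into pieces of size $\approx\gamma n$ (e.g.\ a central vertex joined to several cliques), with integer rounding chosen so that $v$ strictly wins the source objective and the ratio tends to the constant as $k\to\infty$. I will check optimistic ties by making the source champion unique.

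For the $\NC$ entries, the witnesses exploit that $\NC$ counts pieces regardless of size.
\begin{itemize}
\item A star $K_{1,k}$ whose centre is joined to one end of a long path, or a clique, of length $\Theta(k^2)$, makes the centre $\NC$-optimal while the $\PC$- and $\LCC$-optimal vertex is the midpoint of the long path. This gives ratios $O(1/k)$ in both directions out of $\NC$.
\item Conversely, two large cliques joined through a cut vertex, plus a separate vertex carrying $k$ pendant leaves, gives a $\PC$/$\LCC$ champion whose removal creates only $2$ components against an optimum of about $k$. This yields ratios like $1/k$, or $6k/(k^2+3k+1)$ after tuning.
\end{itemize}
The main obstacle is the exact extremal optimisation in the positive cases: pinning down the right combinatorial constraint on $G-w$ (the packing term in (ii)) so that the continuous bound is exactly $2-\sqrt2$ (resp.\ $2/3$) and not a weaker constant. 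I would first try Lagrange conditions on the relaxed problem and then reverse-engineer the extremal family from the tight point.
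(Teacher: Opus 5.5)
\textbf{Verdict: genuine gap.} The two positive constants are the substantive part of the theorem, and your plan leaves both as an unsolved optimisation. The heuristics you use to guide that optimisation point the wrong way.

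\textbf{Direction $\PC\to\LCC$.} In your notation, $a=c_1(G-v)$ for the $\PC$-champion $v$ and $c=c_1(G-w)$ for the $\LCC$-champion $w$. All that is needed is your inequality (i) plus an upper bound on $Q(w)$ in terms of $c$ alone, i.e.\ $\binom a2\le Q(v)\le Q(w)\le F(N,c)$ as in Lemma~\ref{lem:envelope}. Constraint (ii) is slack at the optimum, which sits at $c\approx n/2$ (the $\LCC$-champion bisects) and $a\approx n/\sqrt2$. There the block of $G-w$ containing $v$ has about $n/2$ vertices, not $n-a\approx0.29\,n$.

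Your ``crude'' value shows the misplacement. $1/(1+\alpha)$ at $\alpha=(\sqrt5-1)/2$ equals $(\sqrt5-1)/2\approx0.618>2-\sqrt2$. No valid relaxation can give this, since the family of Proposition~\ref{prop:sharp} has ratio $72/121<0.6$ at $a=120$. It is exactly what $\alpha^2\le\gamma$ yields once you impose $c=n-a$. Refining the packing term only shrinks the feasible region, so it can never pull that bound down to $2-\sqrt2$.

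What actually produces $2-\sqrt2$ is the two-regime envelope. The bound $Q(w)\le N(c-1)/2$ gives a ratio of roughly $1/(1+\sqrt{c/n})$, which is adequate only for $c\le N/2$. For $c\ge N/2$ you need $Q(w)\le\binom c2+\binom{N-c}2$; otherwise the bound decays to $1/2$. The minimum is at the junction $c=N/2$.

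\textbf{Direction $\LCC\to\PC$.} Now $v$ is the $\LCC$-champion and $a=c_1(G-v)$. Your numerator bound is $\binom n2-\binom a2-\binom{n-1-a}2=a(n-a)+(n-1-a)$. Against the generic denominator $M_\PC\le\binom n2-\binom a2$ it fails: the quotient is about $2\alpha/(1+\alpha)\to0$ as $\alpha=a/n\to0$. The paper uses $Q(v)\le N(a-1)/2$ for $a\le N/2$ instead.

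Your structural idea can also rescue the bound, but only in unconditional form. For \emph{every} $w\ne v$, if $w$ lies in the component $B$ of $G-v$, then $V\setminus B$ stays connected in $G-w$. Hence $Q(w)\ge\binom{n-a}2$ and $M_\PC\le\max\{\dam_\PC(v),\binom a2+a(n-a)\}$.
\begin{itemize}
\item For $a\le n/2$, three times your numerator minus twice $\binom a2+a(n-a)$ equals $a(n-2a)+3n-3-2a>0$.
\item For $a\ge n/2$, with $d=n-a$, the generic bound gives $d(a-d)+4d-3>0$.
\end{itemize}
That would be a valid envelope-free route. As written, however, your dichotomy has three problems. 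It leaves ``$w$ outside $v$'s largest component'' unhandled. It names the wrong block size ($n-c_1(G-w)$). It puts the worst case at $c_1\approx n/3$, where the ratio is provably at least about $3/4$; the extremum is the bisection $a\approx n/2$ of $D(k)$.

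\textbf{Sharpness and the $\NC$ witnesses.} The sharpness families are only sketched. Among your $\NC$ witnesses, the clique variant fails. Hanging $K_{k^2}$ on the star centre leaves no balancing vertex, so the attachment vertex becomes the $\PC$- and $\LCC$-champion, and both ratios out of $\NC$ tend to $1$. Only the path version (essentially the paper's $T(k)$) works.
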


The four vanishing directions are witnessed by the families of
Propositions~\ref{prop:F1}--\ref{prop:F4}: $H(k,2)$ with rate
$6k/(k^2+3k+1)$, $H(k,k/2)$ with rate $2/k$, $D(k)$ with rate $1/k$, and
the hub-and-tail family with rate $2(k+1)/(k^2+k+2)$. The two lower
bounds are Propositions~\ref{prop:pc-lcc} and~\ref{prop:lcc-pc}, and
their optimality is Proposition~\ref{prop:sharp}; all proofs are in
Section~\ref{sec:proofs}.

The two positive constants do not depend on how ties in the source
objective are broken, nor on whether the attacked vertex is counted in
the damage.

\begin{theorem}\label{thm:tauminus}
At exact budget one, restricted to cells where the target admits
positive damage, and under either damage convention, the optimal
worst-case pessimistic transfer constants coincide with the optimistic
ones: $2-\sqrt2$ for $\PC\to\LCC$, $2/3$ for $\LCC\to\PC$, and $0$ for
all four directions involving $\NC$. The two positive guarantees hold
uniformly for every source maximiser.
\end{theorem}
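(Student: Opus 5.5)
Theorem~\ref{thm:tauminus} has three parts: the uniform lower bounds (for every source maximiser, under both conventions), the matching upper bounds, and the vanishing of the $\NC$ directions. Since $\tau^-\le\tautr$, the upper-bound side at $b=1$ is immediate under the original convention from Theorem~\ref{thm:main}; the work is in the other convention and in the lower bounds.

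\textbf{Lower bounds, original convention.} I would re-read the proofs of Propositions~\ref{prop:pc-lcc} and~\ref{prop:lcc-pc} and check that they use only the defining property of the source vertex. For $\PC\to\LCC$ the argument should take an arbitrary $v\in\Top_\PC$ and compare it with an $\LCC$-optimal vertex $x$. Writing $\sigma(G-v)$ and $\sigma(G-x)$, the inequality $\dam_\PC(v)\ge\dam_\PC(x)$ becomes $\sum\binom{c_i(v)}2\le\sum\binom{c_i(x)}2$. The largest term on the right is $\binom{n-1-\dam_{\LCC}(x)+1}{2}$, and the left side is at least $\binom{c_1(v)}{2}$. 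A quadratic optimisation over the remaining mass then gives $\dam_\LCC(v)\ge(2-\sqrt2)\dam_\LCC(x)$. No tie-breaking enters, so the bound holds for every maximiser; the $\LCC\to\PC$ bound is handled the same way. If a proof in Section~\ref{sec:proofs} used the optimistic choice, I would redo that step with an arbitrary maximiser.

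\textbf{Survivor convention.} Here $\dam'_\PC=\dam_\PC-(n-1)$ and $\dam'_\LCC=\dam_\LCC-1$. Champion sets are unchanged. Restricting to cells where $M'_w>0$, I would rerun the same inequalities with shifted quantities. Writing $a=\dam'_\LCC$, the $\PC$ comparison reads $\sum_i\binom{c_i}2$ with $\sum c_i=n-1$ and largest part $n-1-a$. I expect the extremal quadratic to have the same limiting ratios, because the shifts are lower-order relative to the growth of the witness families. This is the main obstacle. Shifting numerator and denominator by the same constant can only decrease a ratio below $1$, so the survivor ratio may a priori drop below the constant for small $M_w$. I would treat that regime separately by a finite check: small optimal damage forces a near-2-connected structure with few cut vertices, and the bounded cases can be verified directly or by the enumeration of Section~\ref{sec:method}. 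Large $M_w$ is then handled by the asymptotic inequality with explicit error terms.

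\textbf{Sharpness and $\NC$ directions.} For the upper bounds I would use the families of Proposition~\ref{prop:sharp} and Propositions~\ref{prop:F1}--\ref{prop:F4}. For each, I would verify that the source champion set is a single automorphism orbit, or at least that every maximiser has the same target damage, so that $\tau^-=\tautr$ along the family. I would then recompute the limits under the shifted damages. In the sharp families the optimal damages grow without bound, so subtracting $n-1$, or $1$, changes the ratios only by $o(1)$ and the limits $2-\sqrt2$ and $2/3$ survive. For the $\NC$ directions, $\dam_\NC$ is unchanged by the shift. The $\PC$ and $\LCC$ targets in the witness families also have unbounded optima, so rates of order $\Theta(1/k)$ persist and the infimum is $0$.
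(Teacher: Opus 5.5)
\textbf{The gap is the survivor-scale lower bound.}
Your treatment of the original convention and of the optimality side is fine and matches the paper. The envelope chain $\binom B2\le Q(p)\le Q(\ell)\le F(N,a)$ uses only two facts: $p$ minimises $Q$, and every $\LCC$-champion leaves the same largest part $a$. So it applies to each tied maximiser separately. On the witness families the source champions are unique, the optima grow without bound, and $\NC$ targets are unshifted, so their limits survive the shift.

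The gap is the lower bound on the survivor scale. You correctly flag it as the obstacle, but your plan does not resolve it. As you note, $(x-k)/(y-k)\le x/y$, so the survivor guarantee is the \emph{stronger} one. It cannot be obtained by perturbing the inclusive one, and your proposed split does not close, for two reasons.
\begin{itemize}
\item When the target is $\LCC$, a small survivor optimum does not bound the order. For example, $K_{n-1}$ with one pendant vertex has $M'_\LCC=1$ for every $n$. More generally, any graph whose cut vertices detach boundedly many vertices has bounded $M'_\LCC$. So your ``bounded cases'' form an infinite set, and no enumeration can settle them; the paper's exhaustive enumeration stops at $n=9$.
\item For large optima there is no slack to absorb error terms. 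Along the extremal families the survivor ratios are $2(k+1)/(3k)=2/3+O(1/k)$ and $(2a-B_a)/a=2-\sqrt2+O(1/a)$. An asymptotic inequality therefore yields only the constant minus $o(1)$, unless the error term has a definite sign. Establishing that sign is exactly the exact inequality you are trying to avoid.
\end{itemize}

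\textbf{What is missing: an exact two-parameter inequality.}
The survivor-scale comparison is an exact inequality in the real parameters $N=n-1$ and $a$, valid for all $N\ge2$ and $1\le a\le N-1$.

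For $\PC\to\LCC$, put $R'=N-(2-\sqrt2)(N-a)$ and show $2F(N,a)\le R'(R'-1)$.
\begin{itemize}
\item For $a\le N/2$, use $2F(N,a)\le N(a-1)$. Then check that $R'(R'-1)-N(a-1)$ is decreasing in $a$: its derivative is at most $(2\sqrt2-3)N-(2-\sqrt2)<0$. Its value at $a=N/2$ is $(2-\sqrt2)N/2>0$.
\item For $a\ge N/2$, with $d=N-a$, one has
\[
R'(R'-1)-a(a-1)-d(d-1)=d\bigl[2(\sqrt2-1)(N-2d)+(2-\sqrt2)\bigr]\ge0 .
\]
\end{itemize}
Since $R'\ge1$, this gives $B\le R'$, that is, $(N-B)/(N-a)\ge2-\sqrt2$ for every $\PC$-champion.

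For $\LCC\to\PC$, every tied champion has survivor ratio at least
\[
\frac{\binom N2-F(N,a)}{\binom N2-\binom a2}.
\]
For $a\le N/2$ this is at least $N/(N+a-1)\ge2/3$. For $a\ge N/2$ it reduces to $\binom N2-\binom a2-3\binom d2=d(N-2d+1)\ge0$.

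The inclusive bounds then follow from $(x+k)/(y+k)\ge x/y$ for $0\le x\le y$, $y>0$, $k\ge0$. So the right order is the reverse of yours: prove the survivor scale exactly, then deduce the original convention from it.
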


The proof (Section~\ref{sec:proof-tauminus}) reuses the envelope argument
of Section~\ref{sec:proofs-constants}, which never selects a particular
champion. Section~\ref{sec:budgets} extends the lower bounds to attack
sets of any fixed size and shows that, over all connected graphs, the
whole table of Theorem~\ref{thm:main} is the same at every fixed budget.

% =====================================================================
\section{Separation thresholds}\label{sec:thresholds}

Disagreement between the objectives has exact small-graph thresholds.

\begin{theorem}\label{thm:thresholds}
For $b=1$:
(i) $\Top_{\LCC}$ and $\Top_{\NC}$ intersect on every connected graph with
$n\le 6$ (for $n=3$ this is immediate: in $K_3$ all champion sets are
equal, in $P_3$ the centre is the common champion; the cases
$4\le n\le6$ are checked exhaustively), and are disjoint on exactly $8$
of the $853$ connected graphs with $n=7$. One of them is the centred
double star $D(2)$ of Section~\ref{sec:proof-D}: its centre is the
unique $\LCC$-champion (damage $4$) while the two hubs are the
$\NC$-champions (damage $2$);
(ii) $\Top_{\PC}$ and $\Top_{\NC}$ intersect on every connected graph with
$n\le 7$, and are disjoint on exactly $48$ of the $11{,}117$ connected
graphs with $n=8$; one of them consists of two hubs with two pendant
leaves each, joined by a path through two further vertices, and there
the two middle vertices are the $\PC$-champions while the two hubs are
the $\NC$-champions;
(iii) $\Top_{\PC}$ and $\Top_{\LCC}$ intersect on every connected graph
with $n\le 8$, and are disjoint on exactly $129$ of the $261{,}080$
connected graphs with $n=9$; one of them is the centred double star
$D(3)$, whose centre ($\LCC$-champion, profile
$(4,4)$) differs from its two hubs ($\PC$-champions, profile
$(5,1,1,1)$);
(iv) \emph{(triple separation)} no connected graph with $n\le 10$ separates
all three champion sets pairwise simultaneously (by exhaustive
enumeration of the $11{,}716{,}571$ connected graphs with $n=10$), whereas \emph{for every} $n\ge 11$ there is a
\emph{tree} on $n$ vertices whose three champions are unique and pairwise
distinct: the explicit backbone families $W(\alpha)$ ($n=2\alpha+5$, odd
$n\ge11$) and $W^{\mathrm{ev}}(\alpha)$ ($n=2\alpha+6$, even $n\ge16$),
together with sporadic trees for $n\in\{12,14\}$ ($3$ of $551$,
resp.\ $34$ of $3159$ trees). The triple-separation threshold is exactly
$11$.
\end{theorem}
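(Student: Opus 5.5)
The theorem splits into finite statements that are certified by enumeration, plus an infinite family that must be handled by hand. The plan treats them separately.

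\textbf{Finite parts (i)--(iv) for $n\le 10$.} For every connected graph with $3\le n\le 10$ (generated by an isomorph-free generator and checked against the known counts $853$, $11{,}117$, $261{,}080$, $11{,}716{,}571$), I would compute, for each vertex $v$, the component profile of $G-v$. One linear-time block--cut-tree pass per graph suffices, since only cut vertices give nontrivial profiles. From the profile I read off $\dam_\PC$, $\dam_\LCC$ and $\dam_\NC$ in exact integers, form the three champion sets, and record which pairwise intersections are empty.

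The counts $8$, $48$, $129$ and the emptiness statements for smaller $n$ then follow directly. The zero count of triple separations at $n\le10$ is also part of this output. To make the result certifiable, each disjoint instance is stored as a graph6 witness with its champion sets, so it can be rechecked independently.

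The named examples I would also verify by hand. In $D(2)$, removing the centre gives profile $(3,3)$, so $\dam_\LCC=4$. Removing a hub gives $(4,1,1)$, so $\dam_\NC=2$ and $\dam_\LCC=3$. In $D(3)$, the centre gives $(4,4)$ with $\dam_\PC=36-12=24$, while a hub gives $(5,1,1,1)$ with $\dam_\PC=36-10=26$. The $n=8$ example is handled the same way. The case $n=3$ is immediate, as stated.

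\textbf{The infinite part of (iv).} Here I would define the backbone tree $W(\alpha)$ on $2\alpha+5$ vertices so that three candidate vertices play distinct roles:
\begin{itemize}
\item a high-degree hub carrying many pendant leaves, which wins $\NC$;
\item a vertex splitting the tree into two nearly equal halves, which wins $\LCC$;
\item a vertex whose removal cuts off a moderately large branch plus some singletons, which wins $\PC$ through the $\sum\binom{c_i}2$ tradeoff.
\end{itemize}
Because $G$ is a tree, $\dam_\NC(v)=\deg v-1$. So $\NC$-uniqueness reduces to a degree comparison. $\LCC$-uniqueness reduces to the centroid being unique and unlike the other two. For $\PC$, I would compute $\sum_i\binom{c_i}2$ for each vertex as an explicit quadratic in $\alpha$. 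It is enough to compare the three candidates and argue monotonicity along the backbone: moving away from a candidate only enlarges the largest remaining component, which bounds every other vertex's damage. The even family $W^{\mathrm{ev}}(\alpha)$ gets the same treatment with one extra leaf placed so that no tie is created. The sporadic cases $n\in\{12,14\}$ are settled by enumerating all trees of those orders ($551$ and $3159$) and exhibiting the $3$ and $34$ witnesses.

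\textbf{Main obstacle.} I expect the hardest step to be the uniform $\PC$ comparison in $W(\alpha)$ and $W^{\mathrm{ev}}(\alpha)$. The quadratic gaps between the $\PC$-champion and the centroid or the hub may be small, so they must stay strictly positive for every $\alpha$ in range. I would first write the three $\sum\binom{c_i}2$ values in closed form and check that their differences are positive polynomials in $\alpha$ beyond the threshold. The few remaining small $\alpha$ would be covered by the enumeration data. Exactness of the threshold $11$ then follows by combining the $n\le10$ enumeration with the $n=11$ member $W(3)$.
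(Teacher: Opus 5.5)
Your plan for the finite parts matches the paper: exhaustive enumeration up to order $10$, plus hand checks. Your hand computations for $D(2)$, $D(3)$ and the $n=8$ two-hub example are correct.

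\textbf{The gap is in the infinite part of (iv).} You never actually define $W(\alpha)$ or $W^{\mathrm{ev}}(\alpha)$. The three roles you list (hub, centroid, a vertex cutting off a branch plus singletons) do not determine a tree. Yet the existence of trees with three unique, pairwise distinct champions is the whole nontrivial content here: only $3$ of the $235$ trees on $11$ vertices qualify.

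The constraints pull against each other:
\begin{itemize}
\item In a tree $\dam_\NC=\deg-1$, so the hub must have strictly larger degree than the $\PC$-candidate. This caps how finely the $\PC$-candidate can fragment the tree.
\item The $\PC$-candidate must still push its $Q$ below that of the centroid, even though its largest part is strictly larger than the centroid's.
\item The hub itself must not win $\PC$.
\end{itemize}

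The paper resolves this with a specific family. Take the backbone $h\,b\,c\,p\,r$ with $\alpha$ leaves on $h$, $\alpha-2$ on $p$ and $2$ on $r$. Then $\pi(h)=(\alpha+4,1^{\alpha})$, $\pi(c)=(\alpha+2,\alpha+2)$ and $\pi(p)=(\alpha+3,3,1^{\alpha-2})$. The $\NC$ margin is just $\alpha$ versus $\alpha-1$, $Q(h)-Q(p)=\alpha$, and $Q(c)-Q(p)=(\alpha^2+\alpha-8)/2$. The last difference equals $2$ at $\alpha=3$ and is negative at $\alpha=2$. Without such an explicit family, its full profile table and the positivity of the difference polynomials, ``for every $n\ge 11$'' is not established.

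\textbf{Two smaller points.}

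First, your monotonicity shortcut for the non-candidate vertices is false as stated. Moving from $p$ to $c$ shrinks the largest part. Even measured from the centroid, largest-part monotonicity cannot order $Q$, since the centroid has the smallest largest part and still loses $\PC$. Concretely, in $W(\alpha)$ the vertex $b$ has largest part $\alpha+3$, the same as $p$. So $Q(b)\ge\binom{\alpha+3}2$ does not beat $Q(p)=\binom{\alpha+3}2+3$, and you need the exact value $Q(b)=\alpha^2+3\alpha+3$. With only five non-leaf vertices, computing every $Q$ directly is cheap and is what the paper does.

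Second, for the even family the issue is not just avoiding ties. With the extra leaf on $c$, the difference becomes $Q(c)-Q(p)=(\alpha^2-\alpha-14)/2$. The $\PC$ championship genuinely migrates to $c$ at $\alpha=4$ ($Q(c)=30<31=Q(p)$). That reversal is why $W^{\mathrm{ev}}$ starts at $n=16$ and why $n\in\{12,14\}$ need the sporadic trees.
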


Parts (i)--(iii) and the finite part of (iv) are results of the
enumeration described in Section~\ref{sec:method}; the infinite part of
(iv) is Theorem~\ref{thm:W}, proved in Section~\ref{sec:proof-W}.
Figure~\ref{fig:triple} shows the smallest example.

\begin{figure}[t]
\centering
\begin{tikzpicture}[
  every node/.style={font=\small},
  vert/.style={circle,draw,inner sep=1.6pt,minimum size=5.5mm},
  champN/.style={vert,fill=orange!35,thick},
  champL/.style={vert,fill=cyan!30,thick},
  champP/.style={vert,fill=magenta!25,thick},
  leaf/.style={circle,draw,inner sep=1.2pt,minimum size=4mm,fill=gray!12}]
  % backbone: h(6) - b(8) - c(4) - p(10) - r(0)
  \node[champN] (h) at (0,0) {$h$};
  \node[vert]   (b) at (1.7,0) {$b$};
  \node[champL] (c) at (3.4,0) {$c$};
  \node[champP] (p) at (5.1,0) {$p$};
  \node[vert]   (r) at (6.8,0) {$r$};
  \draw (h)--(b)--(c)--(p)--(r);
  % leaves
  \node[leaf] (h1) at (-1.0,0.75) {}; \node[leaf] (h2) at (-1.25,0) {};
  \node[leaf] (h3) at (-1.0,-0.75) {};
  \draw (h)--(h1) (h)--(h2) (h)--(h3);
  \node[leaf] (p1) at (5.1,-1.05) {}; \draw (p)--(p1);
  \node[leaf] (r1) at (7.65,0.7) {}; \node[leaf] (r2) at (7.65,-0.7) {};
  \draw (r)--(r1) (r)--(r2);
  % annotations
  \node[anchor=north,text width=3.1cm,align=center] at (0,-1.35)
    {$\NC$ champion\\profile $(7,1,1,1)$,\ $\dam_\NC=3$};
  \node[above=5mm of c,text width=3.0cm,align=center]
    {$\LCC$ champion\\profile $(5,5)$,\ $\dam_\LCC=6$};
  \node[anchor=north,text width=3.2cm,align=center] at (5.1,-1.35)
    {$\PC$ champion\\profile $(6,3,1)$,\ $\dam_\PC=37$};
\end{tikzpicture}
\caption{The smallest graph on which all three champions differ pairwise:
an $11$-vertex tree ($W(3)$ up to isomorphism). Deleting $h$ shatters four
components; deleting $c$ bisects the tree; deleting $p$ trades balance for
an extra component and wins pairwise connectivity. No graph on at most
$10$ vertices admits such a triple separation
(Theorem~\ref{thm:thresholds}(iv)), and trees realising it exist for every
$n\ge11$ (Theorem~\ref{thm:W}).}
\label{fig:triple}
\end{figure}

% =====================================================================
\section{Dependence on the failure probability}\label{sec:regime}

\begin{theorem}\label{thm:regime}
Among all connected graphs with $n\le 5$ no leadership change of the
Birnbaum-type $\PC$-criticality occurs anywhere in $p\in(0,1)$ (by exact
root isolation). Let $R_6$ be the graph on $\{v_0,\dots,v_5\}$
formed by the triangle $v_0v_1v_2$ and the $4$-cycle $v_0v_2v_3v_4$,
which share the edge $v_0v_2$, together with a pendant vertex $v_5$
attached to $v_1$. For $R_6$ the criticality polynomials satisfy
$C_1(p)-C_0(p)=(1-p)^2(p^2-6p+4)$, producing a unique leadership change at
\[
p^{*}=3-\sqrt5\approx 0.7639,
\]
with $\Top^{\mathrm{reg}}_{\PC,p}=\{v_1\}$ for $p<p^{*}$,
$\{v_0,v_1,v_2\}$ at $p=p^{*}$, and $\{v_0,v_2\}$ for $p>p^{*}$.
\end{theorem}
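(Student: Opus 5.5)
The proof has two independent parts: an exact finite computation for $n\le5$, and an explicit polynomial computation for $R_6$. The common tool is a closed form for $C_{\PC,p}(v)$. Since $\dam_\PC(S)=\binom n2-\#\{\text{connected pairs of }G-S\}$, linearity of expectation gives
\[
C_{\PC,p}(v)=\sum_{\{x,y\}}\Bigl(\Pr[x\sim y\text{ in }G-F_{-v}]-\Pr[x\sim y\text{ in }G-F_{-v}-v]\Bigr).
\]
This is the expected number of surviving pairs that are connected in $G-F_{-v}$ but not once $v$ is also deleted. These are the pairs containing $v$, plus the surviving pairs all of whose connecting paths pass through $v$. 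Each term is a polynomial in $p$ with integer coefficients. It is obtained exactly by summing over the $2^{n-1}$ subsets $F_{-v}$ with weight $p^{|F|}(1-p)^{n-1-|F|}$. For $n\le6$ this is at most $32$ subsets per vertex, so every $C_{\PC,p}(v)$ is available in $\mathbb Q[p]$.

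\emph{The graph $R_6$.} I first reduce by symmetry. The map $v_0\leftrightarrow v_2$, $v_3\leftrightarrow v_4$ (fixing $v_1,v_5$) is an automorphism, so $C_0=C_2$ and $C_3=C_4$. This leaves four polynomials $C_0,C_1,C_3,C_5$ to compute by the subset sum above.

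For $v_5$ (a leaf) only the pairs containing $v_5$ contribute. For $v_3$, removal can cut only pairs through the path $v_0v_3v_4v_2$, which has an alternative route via the triangle. So I expect direct comparison to show $C_3,C_5<\max(C_0,C_1)$ on all of $(0,1)$. Concretely, I would verify that $C_0-C_3$ has no root in $(0,1)$ and is positive at one sample point, using a Sturm sequence, and similarly for $v_5$.

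Next I expand $C_1-C_0$ and factor it as $(1-p)^2(p^2-6p+4)$. The quadratic has roots $3\pm\sqrt5$, of which only $3-\sqrt5\approx0.7639$ lies in $(0,1)$. The value at $p=0$ is $4>0$, so $v_1$ is the unique leader for $p<p^*$. At $p^*$ we have the tie $C_1=C_0=C_2$, and for $p>p^*$ the leaders are $\{v_0,v_2\}$. This gives exactly the three regimes in the statement.

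\emph{Minimality: no change for $n\le5$.} I would enumerate all connected graphs on $3\le n\le5$ vertices ($2+6+21$ graphs) and compute every $C_{\PC,p}(v)$ exactly. For each graph and each pair $u,v$, I isolate the real roots in $(0,1)$ of $C_u-C_v$ by Sturm sequences, or dismiss the difference if it is identically zero. The union of these roots cuts $(0,1)$ into open intervals; I evaluate at one rational point per interval and at each root, the latter exactly via algebraic-number sign determination or interval refinement. A leadership change means $\Top^{\mathrm{reg}}_{\PC,p}$ is not the same set throughout. The check I would certify is stronger: for every graph, some fixed set $T$ satisfies $C_u\equiv C_v$ for $u,v\in T$ and $C_u-C_w>0$ on all of $(0,1)$ for $u\in T$, $w\notin T$.

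The main obstacle is handling ties robustly. Differences that vanish identically (from automorphisms, or coincidentally) must be recognised symbolically rather than numerically. Differences that touch zero at an isolated point without changing sign must not be misreported as leadership changes. Working entirely in $\mathbb Z[p]$ with exact zero tests and Sturm counts avoids both issues. The $R_6$ factorisation itself is a routine, if tedious, expansion that can also be checked by hand.
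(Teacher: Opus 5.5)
Your proposal is correct and follows essentially the same route as the paper. The shared steps are: exact computation of the polynomials $C_{\PC,p}(v)$; the symmetry $C_0\equiv C_2$, $C_3\equiv C_4$ on $R_6$; the factorisation of $C_1-C_0$, whose only root in $(0,1)$ is $3-\sqrt5$; and, for the $29$ graphs with $n\le5$, exact root isolation of all pairwise differences, with exact evaluation inside each subinterval and at each root. Your $R_6$ shortcut of showing that $C_0$ dominates $C_3$ and $C_5$ on all of $(0,1)$ does go through, since $C_0-C_3=-p(p-1)(p-2)^2$ and $C_0-C_5=-p(p-1)(p-2)(3p-5)$ are both positive there; one small slip is that the detour through $v_3,v_4$ is $v_0v_4v_3v_2$, not $v_0v_3v_4v_2$.
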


\begin{proof}
For each of the $29$ connected graphs with $n\le5$, all
non-identically-zero pairwise differences $C_{u,p}(v)-C_{u,p}(w)$ are
polynomials with rational coefficients; isolating their real roots in
$(0,1)$, evaluating the champion set at exact rational points inside each
resulting subinterval, \emph{and} evaluating it exactly (in the relevant
algebraic number field) at each isolated root itself, shows that no
leadership handover occurs anywhere in $(0,1)$. For $R_6$ the six
exact polynomials are as plotted in Figure~\ref{fig:regime}; by symmetry
$C_0\equiv C_2$ and $C_3\equiv C_4$, the difference
$C_1-C_0=(1-p)^2(p^2-6p+4)$ has the single interior root $p^*=3-\sqrt5$,
and among all non-identically-zero pairwise differences this is the only
root in $(0,1)$ --- so the leadership envelope changes exactly once.
Evaluating at rational points on either side of $p^*$ (e.g.\ $p=2/5$ and
$p=22/25$) and exactly at $p^*\in\mathbb Q(\sqrt5)$ identifies the three
champion sets stated. The difference polynomials, isolated roots and
evaluation points are given in the supplementary material.
\end{proof}

\begin{figure}[t]
\centering
\begin{tikzpicture}
\begin{axis}[width=0.72\textwidth,height=5.6cm,
  xlabel={failure probability $p$},
  ylabel={$C_{\PC,p}(v)$},
  xmin=0,xmax=1,ymin=0,
  legend style={font=\small,at={(0.5,1.04)},anchor=south,draw=none,
    /tikz/every even column/.append style={column sep=10pt}},
  legend columns=4,
  legend cell align=left,
  domain=0.001:0.999,samples=140]
\addplot[thick,magenta]
  {2*x^5-10*x^4+16*x^3-4*x^2-13*x+9};
\addlegendentry{$v_1$ (apex)}
\addplot[thick,cyan,dashed]
  {2*x^5-11*x^4+24*x^3-21*x^2+x+5};
\addlegendentry{$v_0,v_2$ (orbit pair)}
\addplot[gray]
  {2*x^5-10*x^4+19*x^3-13*x^2-3*x+5};
\addlegendentry{$v_3,v_4$}
\addplot[gray,dotted,thick]
  {2*x^5-8*x^4+10*x^3-9*x+5};
\addlegendentry{$v_5$}
\addplot[black,dashed,thin,forget plot] coordinates {(0.76393,0) (0.76393,4.9)};
\node[font=\small] at (axis cs:0.76393,5.6) {$p^*=3-\sqrt5$};
\end{axis}
\end{tikzpicture}
\caption{Exact Birnbaum-type $\PC$-criticality polynomials of the
graph $R_6$ of Theorem~\ref{thm:regime}. The
leadership changes exactly once, at $p^*=3-\sqrt5$: the apex $v_1$
leads for small failure probabilities, the orbit pair $\{v_0,v_2\}$ for
large ones.}
\label{fig:regime}
\end{figure}

% =====================================================================
% Proofs for budget one (witness families, the two constants, ties and
% survivor scale) -- separate file.
% proofs.tex -- budget one: witness families, the two constants, tied
% optima and the survivor scale.

% ---------------------------------------------------------------------
\section{Proofs for budget one}\label{sec:proofs}\label{sec:proofs-families}

Throughout, deleting a vertex $v$ from the connected graph $G$ leaves the
component profile $\pi(v)=(c_1\ge\dots\ge c_m)$ with $\sum_i c_i=n-1$, and we
write $Q(v)=\sum_i\binom{c_i}{2}$, so that
\[
\dam_\PC(v)=\tbinom n2-Q(v),\qquad
\dam_\LCC(v)=n-c_1,\qquad
\dam_\NC(v)=m-1 .
\]
(For single-vertex deletions from a connected graph on $n\ge3$ vertices
the remainder is non-empty, so $m\ge1$ and the truncation in
Definition~\ref{def:objectives} is inactive.)
Maximising $\dam_\PC$ is equivalent to minimising $Q$. In every table below
the profile entries sum to $n-1$; leaf deletions always leave a single
component and are never competitive, so we list them once per family.

% ---------------------------------------------------------------------
\subsection{The two-clique family $H(k,\ell)$: the cells
$\NC\to\PC$ and $\PC\to\NC$}\label{sec:proof-H}

\begin{definition}
For $k\ge2$, $\ell\ge1$, the graph $H(k,\ell)$ consists of two disjoint
cliques $K_k$; a distinguished vertex $c$ of the first clique carries
$\ell$ pendant leaves; a second distinguished vertex $x\neq c$ of the first
clique is joined by a bridge to a vertex $y$ of the second clique. Thus
$n=2k+\ell$.
\end{definition}

\begin{lemma}\label{lem:H-profiles}
The deletion profiles are:
\[
\begin{array}{l|l|l}
v & \pi(v) & Q(v)\\\hline
c & (2k-1,\,1^{\ell}) & \binom{2k-1}{2}\\
x & (k+\ell-1,\,k) & \binom{k+\ell-1}{2}+\binom k2\\
y & (k+\ell,\,k-1) & Q(x)+\ell\\
\text{other clique vertices} & (n-1) & \binom{n-1}{2}\\
\text{leaves} & (n-1) & \binom{n-1}{2}
\end{array}
\]
and consequently
\(
\dam_\NC(c)=\ell,\ \dam_\NC(x)=\dam_\NC(y)=1,
\)
all other vertices having $\dam_\NC=0$. Moreover
\begin{equation}\label{eq:H-key}
2\bigl[Q(c)-Q(x)\bigr]=2k^2-2k\ell-\ell^2-2k+3\ell .
\end{equation}
\end{lemma}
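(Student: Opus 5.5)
The plan is to read off each profile directly from the structure of $H(k,\ell)$, then verify the identity \eqref{eq:H-key} by expanding binomials.

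\textbf{Profiles.}
Write $A$ for the first clique (containing $c$ and $x$), $B$ for the second clique (containing $y$), and $L$ for the set of $\ell$ leaves at $c$. Since $n=2k+\ell$, every profile must sum to $n-1=2k+\ell-1$, which serves as a check on each row.
\begin{itemize}
\item Deleting $c$ isolates each leaf. The remainder $(A\setminus\{c\})\cup B$ stays connected through the bridge $xy$, because $x\ne c$. It has $(k-1)+k=2k-1$ vertices, so the profile is $(2k-1,1^\ell)$.
\item Deleting $x$ removes the only bridge. This leaves $(A\setminus\{x\})\cup L$, which is connected through $c$ and has $k-1+\ell$ vertices, and separately $B$, with $k$ vertices. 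The profile is $(k+\ell-1,k)$.
\item Deleting $y$ leaves $A\cup L$ with $k+\ell$ vertices and $B\setminus\{y\}$ with $k-1$ vertices. Here $k\ge2$ ensures $B\setminus\{y\}$ is non-empty.
\item Deleting any other clique vertex, or any leaf, leaves a connected graph. Cliques minus one vertex remain connected, and the bridge and pendant attachments survive.
\end{itemize}
The values $\dam_\NC=m-1$ follow immediately from these profiles.

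\textbf{The relation $Q(y)=Q(x)+\ell$.}
Compute
\[
\tbinom{k+\ell}{2}-\tbinom{k+\ell-1}{2}=k+\ell-1,
\qquad
\tbinom{k-1}{2}-\tbinom{k}{2}=-(k-1).
\]
Adding these gives exactly $\ell$. The row order in the table assumes $k+\ell\ge k-1$ and $k+\ell-1\ge k$. The second inequality holds because $\ell\ge1$, so the profiles are correctly sorted.

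\textbf{Identity \eqref{eq:H-key}.}
Expand
\[
2Q(c)=(2k-1)(2k-2)=4k^2-6k+2,
\]
\[
2Q(x)=(k+\ell-1)(k+\ell-2)+k(k-1)=2k^2+2k\ell+\ell^2-4k-3\ell+2 .
\]
Subtracting gives $2k^2-2k\ell-\ell^2-2k+3\ell$, which is \eqref{eq:H-key}.

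There is no real obstacle in this lemma. The only points needing care are connectivity after deleting $c$ (which relies on $x\ne c$) and the sign bookkeeping in the expansion.
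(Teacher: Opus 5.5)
Your proof is correct and follows the paper's argument: read the deletion profiles directly off the structure of $H(k,\ell)$, obtain $Q(y)-Q(x)=\ell$ from a one-line binomial difference, and verify \eqref{eq:H-key} by direct expansion. Your explicit expansions, and your notes that connectivity after deleting $c$ uses $x\ne c$ and that $B\setminus\{y\}\neq\emptyset$ uses $k\ge2$, fill in details the paper leaves implicit.
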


\begin{proof}
Deleting $c$ isolates the $\ell$ leaves; the remainder (both cliques joined
by the bridge, minus $c$) stays connected of order $2k-1$. Deleting $x$
splits off the second clique (order $k$) from the first clique with the
leaves (order $k+\ell-1$); deleting $y$ is analogous with orders $k+\ell$
and $k-1$, and $Q(y)-Q(x)=\ell$ by a one-line computation. All remaining
deletions leave a connected graph. Identity~\eqref{eq:H-key} is direct
expansion.
\end{proof}

\begin{proposition}\label{prop:F1}
Let $\ell=2$ and $k\ge3$. Then $\Top_\NC=\{c\}$ and $\Top_\PC=\{x\}$, both
unique, the sets are disjoint, and
\[
\tautr_{\NC\to\PC}\bigl(H(k,2)\bigr)=\frac{6k}{k^2+3k+1}
=\Theta(1/k)\longrightarrow0 .
\]
At $k=2$ the construction breaks: \eqref{eq:H-key} becomes negative and
$\Top_\PC=\{c\}$, destroying disjointness; hence $k\ge3$ is sharp.
\end{proposition}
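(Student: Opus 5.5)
The plan is to read everything off Lemma~\ref{lem:H-profiles} with $\ell=2$, so $n=2k+2$, and check the two champion sets and the ratio directly.

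\textbf{The $\NC$ champion.} The lemma gives $\dam_\NC(c)=2$, $\dam_\NC(x)=\dam_\NC(y)=1$, and $0$ elsewhere. Hence $\Top_\NC=\{c\}$ is unique for every $k\ge2$.

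\textbf{The $\PC$ champion.} Maximising $\dam_\PC$ means minimising $Q$. Only $c$, $x$, $y$ have $Q<\binom{n-1}{2}$, so every other vertex is beaten. The lemma gives $Q(y)=Q(x)+2>Q(x)$, which removes $y$. Substituting $\ell=2$ into \eqref{eq:H-key} gives
\[
2\bigl[Q(c)-Q(x)\bigr]=2k^2-4k-4-2k+6=2(k^2-3k+1).
\]
The quadratic $k^2-3k+1$ is positive for $k\ge3$ (value $1$ at $k=3$, increasing afterwards). It equals $-1$ at $k=2$. Thus $\Top_\PC=\{x\}$ is unique for $k\ge3$, and it is disjoint from $\{c\}$. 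At $k=2$ the sign flips, so $\Top_\PC=\{c\}$. There I would recheck directly that $Q(c)=\binom32=3$ is below $Q(y)$ and below $\binom{5}{2}$. Then $\Top_\PC=\Top_\NC$, which gives the stated sharpness of $k\ge3$.

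\textbf{The ratio.} Since $\Top_\NC$ is a singleton, optimism plays no role, and the ratio is $\dam_\PC(c)/\dam_\PC(x)$. I would compute both numerators as differences of binomials with $n-1=2k+1$:
\[
\dam_\PC(c)=\tbinom{2k+2}{2}-\tbinom{2k-1}{2}=(k+1)(2k+1)-(2k-1)(k-1)=6k,
\]
\[
\dam_\PC(x)=\tbinom{2k+2}{2}-\tbinom{k+1}{2}-\tbinom k2=(k+1)(2k+1)-k^2=k^2+3k+1.
\]
The quotient is $6k/(k^2+3k+1)$, which is $\Theta(1/k)$ and tends to $0$.

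\textbf{Main obstacle.} There is no real obstacle; the only care needed is the arithmetic of \eqref{eq:H-key} and the $k=2$ boundary case. I would double-check both by evaluating $k=3$ explicitly. There the profiles are $(5,1,1)$, giving $Q=10$, and $(4,3)$, giving $Q=9$. This confirms that $x$ wins with damage $28-9=19=9+9+1$, and that $c$ has damage $18=6\cdot3$.
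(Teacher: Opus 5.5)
Your proposal is correct and follows the paper's proof essentially step for step. You substitute $\ell=2$ into \eqref{eq:H-key} to get $2(k^2-3k+1)$, dispose of $y$ via $Q(y)=Q(x)+2$ and of the non-cut vertices via $\binom{n-1}{2}$, and compute $\dam_\PC(c)=6k$ and $\dam_\PC(x)=k^2+3k+1$ from the binomials with $n=2k+2$; your explicit $k=2$ check ($Q$-values $3,4,6,10$) is a small bonus, since the paper's proof leaves that part of the statement implicit in the sign of the quadratic.
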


\begin{proof}
With $\ell=2$, \eqref{eq:H-key} equals $2(k^2-3k+1)>0$ for $k\ge3$, so $x$
uniquely minimises $Q$ among $\{c,x,y\}$; the remaining vertices have the
strictly larger value $\binom{n-1}2$, since with $\ell=2$, $n=2k+2$,
$\binom{n-1}2-Q(c)=\binom{2k+1}2-\binom{2k-1}2=4k-1>0$. Uniqueness of $c$ in $\NC$ is
immediate from Lemma~\ref{lem:H-profiles}. Finally
$\dam_\PC(c)=\binom n2-\binom{2k-1}2=6k$ and
$M_\PC=\dam_\PC(x)=k^2+3k+1$ by direct computation with $n=2k+2$.
\end{proof}

\begin{proposition}\label{prop:F2}
Let $k\ge4$ be even and $\ell=k/2$. Then $\Top_\PC=\{x\}$ uniquely,
$\Top_\NC=\{c\}$ uniquely, the sets are disjoint, and
\[
\tautr_{\PC\to\NC}\bigl(H(k,k/2)\bigr)=\frac1\ell=\frac2k
\longrightarrow0 .
\]
\end{proposition}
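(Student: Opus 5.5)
The plan is to read everything off Lemma~\ref{lem:H-profiles} with $\ell=k/2$ and $n=2k+\ell=5k/2$, following the same pattern as the proof of Proposition~\ref{prop:F1}. Only three vertices, $c$, $x$ and $y$, have a disconnecting deletion. Every other clique vertex and every leaf leaves a single component of order $n-1$, so these contribute $Q=\binom{n-1}2$ and $\dam_\NC=0$.

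\emph{Step 1: $\Top_\PC=\{x\}$.} Substituting $\ell=k/2$ into \eqref{eq:H-key} gives
\[
2\bigl[Q(c)-Q(x)\bigr]=2k^2-k^2-\tfrac{k^2}{4}-2k+\tfrac{3k}{2}=\tfrac34k^2-\tfrac12k=\tfrac k4(3k-2).
\]
This is strictly positive for $k\ge4$, so $Q(x)<Q(c)$. Lemma~\ref{lem:H-profiles} gives $Q(y)=Q(x)+\ell>Q(x)$. Every remaining vertex has $Q=\binom{n-1}2$. Since $n-1=2k+\ell-1>2k-1$, we get $\binom{n-1}2>\binom{2k-1}2=Q(c)>Q(x)$. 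Hence $x$ is the unique minimiser of $Q$, that is, the unique $\PC$-champion.

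\emph{Step 2: $\Top_\NC=\{c\}$.} By Lemma~\ref{lem:H-profiles}, $\dam_\NC(c)=\ell=k/2\ge2$ because $k\ge4$. We also have $\dam_\NC(x)=\dam_\NC(y)=1$, and every other vertex has $\NC$-damage $0$. So $c$ is the unique $\NC$-champion with $M_\NC=\ell>0$. Since $c\ne x$, the two champion sets are disjoint.

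\emph{Step 3: the ratio.} Because $\Top_\PC$ is the singleton $\{x\}$, the maximum in Definition~\ref{def:transfer} is over one vertex only. This gives $\tautr_{\PC\to\NC}=\dam_\NC(x)/M_\NC=1/\ell=2/k\to0$.

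No step is a real obstacle; the only point needing care is the sign check in Step 1. Evenness of $k$ is used only so that $\ell$ is an integer. The bound $k\ge4$ serves two purposes: it makes $\ell\ge2$, which gives uniqueness of $c$ in Step 2 (at $k=2$ we would have $\ell=1$ and a three-way $\NC$ tie), and it keeps the quadratic $\tfrac k4(3k-2)$ safely positive.
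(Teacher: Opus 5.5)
Your proof is correct and follows the paper's own argument. You substitute $\ell=k/2$ into \eqref{eq:H-key} to get $k(3k-2)/4>0$, dispose of $y$ and the non-cut vertices, read off $\dam_\NC(c)=\ell\ge2>1=\dam_\NC(x)$, and obtain the ratio $1/\ell$. The only difference is that you write out the comparisons $Q(y)=Q(x)+\ell$ and $\binom{n-1}2>\binom{2k-1}2$, which the paper compresses into ``as before''.
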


\begin{proof}
With $\ell=k/2$, \eqref{eq:H-key} equals $k(3k-2)/4>0$, so $x$ is again the
unique $Q$-minimiser (the comparison with the non-cut vertices is as
before). Here $\dam_\NC(c)=\ell\ge2>1=\dam_\NC(x)$ gives unique champions
and $\tautr=\dam_\NC(x)/\dam_\NC(c)=1/\ell$.
\end{proof}

% ---------------------------------------------------------------------
\subsection{The centred double star $D(k)$: the cell $\LCC\to\NC$ and
the constant $2/3$}\label{sec:proof-D}

\begin{definition}
For $k\ge2$, $D(k)$ has a centre $c$ adjacent to two hubs $h_1,h_2$, each
hub carrying $k$ pendant leaves; $n=2k+3$.
\end{definition}

\begin{proposition}\label{prop:F3}
In $D(k)$:
\[
\begin{array}{l|l|l|l}
v & \pi(v) & \dam_\PC(v) & (\dam_\LCC,\dam_\NC)\\\hline
c & (k+1,k+1) & (k+1)(k+3) & (k+2,\,1)\\
h_i & (k+2,\,1^k) & (k+1)(3k+4)/2 & (k+1,\,k)\\
\text{leaf} & (2k+2) & 2k+2 & (1,\,0)
\end{array}
\]
Hence for all $k\ge2$: $\Top_\LCC=\{c\}$ and $\Top_\NC=\{h_1,h_2\}$, and
\[
\tautr_{\LCC\to\NC}\bigl(D(k)\bigr)=\frac1k\longrightarrow0,
\qquad
\tautr_{\LCC\to\PC}\bigl(D(k)\bigr)=\frac{2(k+3)}{3k+4}\longrightarrow\frac23 .
\]
Moreover $\dam_\PC(h_i)-\dam_\PC(c)=(k+1)(k-2)/2$, so
$\Top_\PC=\{h_1,h_2\}$ for $k\ge3$ while $\Top_\PC=\{c,h_1,h_2\}$ at $k=2$
(where the second ratio correctly evaluates to $1$). At $k=3$ ($n=9$) the
two ratios are $1/3$ and $12/13$, which are also the smallest values of
these two directed ratios over all connected graphs of order $9$
(Table~\ref{tab:comp}).
\end{proposition}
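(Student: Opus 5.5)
We compute every deletion profile of $D(k)$ directly, read off the three damages from them, identify the champion sets, and form the two ratios.

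\textbf{Profiles.} Deleting the centre $c$ separates the two stars $h_i$ plus its $k$ leaves, each of order $k+1$, so $\pi(c)=(k+1,k+1)$. Deleting a hub $h_1$ isolates its $k$ leaves and leaves the connected piece $\{c,h_2\}$ together with the leaves of $h_2$, of order $k+2$; hence $\pi(h_1)=(k+2,1^k)$. Deleting a leaf leaves a connected graph of order $2k+2$.

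\textbf{Damages.} We use $n=2k+3$, so $\binom n2=(2k+3)(k+1)$.
\begin{itemize}
\item For the centre, $Q(c)=(k+1)k$, which gives $\dam_\PC(c)=(k+1)(2k+3-k)=(k+1)(k+3)$.
\item For a hub, $Q(h_i)=\binom{k+2}2=(k+2)(k+1)/2$, which gives $\dam_\PC(h_i)=(k+1)\bigl[(4k+6)-(k+2)\bigr]/2=(k+1)(3k+4)/2$.
\item For a leaf, $\dam_\PC=\binom{n}{2}-\binom{n-1}{2}=n-1=2k+2$.
\end{itemize}
The $\LCC$ damages are $n-c_1$, namely $k+2$, $k+1$ and $1$. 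The $\NC$ damages are $m-1$, namely $1$, $k$ and $0$.

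\textbf{Champion sets.} We compare within each objective.
\begin{itemize}
\item For $\LCC$, $c$ is the unique champion, since $k+2$ exceeds both $k+1$ and $1$.
\item For $\NC$ with $k\ge2$, the champion set is $\{h_1,h_2\}$, since $k>1>0$. The two champion sets are therefore disjoint.
\item For $\PC$, the difference is $\dam_\PC(h_i)-\dam_\PC(c)=(k+1)\bigl[(3k+4)-(2k+6)\bigr]/2=(k+1)(k-2)/2$. Leaves lose to $c$ because $(k+1)(k+3)>2k+2$. Hence the $\PC$-champions are $\{h_1,h_2\}$ for $k\ge3$, and all of $\{c,h_1,h_2\}$ tie at $k=2$.
\end{itemize}

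\textbf{Ratios.} Since $\Top_\LCC=\{c\}$ is a singleton, optimism plays no role.
\begin{itemize}
\item For $\LCC\to\NC$, the ratio is $\dam_\NC(c)/M_\NC=1/k$.
\item For $\LCC\to\PC$ with $k\ge3$, the ratio is $(k+1)(k+3)\big/\bigl[(k+1)(3k+4)/2\bigr]=2(k+3)/(3k+4)\to2/3$.
\item At $k=2$, $c$ is itself a $\PC$-champion, so the second ratio is $1$. This agrees with the formula, which gives $10/10$.
\end{itemize}
At $k=3$ the two ratios evaluate to $1/3$ and $12/13$.

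The only part not settled by direct computation is the claim that these are the minimal values over all order-$9$ graphs. This is the expected obstacle: it is not provable by hand here and rests on the exhaustive enumeration of the $261{,}080$ connected graphs of order $9$ (Table~\ref{tab:comp}, Section~\ref{sec:method}). We would cite the certified computation for it rather than argue it directly.
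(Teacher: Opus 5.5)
Your proposal is correct and takes the same approach as the paper. Like the paper's proof, you compute the deletion profiles of $D(k)$ directly, read off the three damages, champion sets and the two ratios, and rely on the exhaustive enumeration summarised in Table~\ref{tab:comp} for the claim that $1/3$ and $12/13$ are minimal over order~$9$.
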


\begin{proof}
Deleting $c$ leaves two stars of order $k+1$; deleting a hub isolates its
$k$ leaves and leaves the star of the other hub, joined through $c$, of
order $k+2$; leaf deletion is neutral. All claims are direct computations
from the table; disjointness $\{c\}\cap\{h_1,h_2\}=\emptyset$ requires
$k\ge2$ for $\NC$ (at $k=1$ the centre ties).
\end{proof}

% ---------------------------------------------------------------------
\subsection{The hub-and-tail family: the cell $\NC\to\LCC$}\label{sec:proof-tail}

\begin{definition}
For $k\ge2$ let $T(k)$ consist of a path on $L=k^2$ vertices
$p_1,\dots,p_L$, a hub $h$ adjacent to $p_L$, and $k$ pendant leaves on
$h$; $n=L+k+1$.
\end{definition}

\begin{proposition}\label{prop:F4}
In $T(k)$ (where $L=k^2$ counts path \emph{vertices}), $\Top_\NC=\{h\}$
uniquely, and
\[
\tautr_{\NC\to\LCC}\bigl(T(k)\bigr)
=\frac{k+1}{(L+k)/2+1}
=\frac{2(k+1)}{k^2+k+2}
=\Theta(1/k)\longrightarrow0 .
\]
\end{proposition}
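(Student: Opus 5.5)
The plan is to tabulate the deletion profiles $\pi(v)$ of every vertex of $T(k)$, as in Lemma~\ref{lem:H-profiles} and Proposition~\ref{prop:F3}. From the profiles, $\Top_\NC$, $\dam_\LCC(h)$ and $M_\LCC$ can all be read off directly. Throughout, $n-1=L+k=k^2+k$, which is even; this parity is what makes the closed form exact.

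First I would list the profiles. Deleting $h$ leaves the path $p_1\cdots p_L$ together with $k$ isolated leaves, so $\pi(h)=(L,1^k)$ (for $k\ge2$ we have $L=k^2\ge k$, so the path is the largest part). Hence $\dam_\NC(h)=k$ and $\dam_\LCC(h)=n-L=k+1$. Deleting a path vertex $p_i$ with $2\le i\le L$ leaves at most two components: $p_1\cdots p_{i-1}$ of order $i-1$, and the rest of the path joined through $p_L$ to the star at $h$, of order $L-i+k+1$. Deleting $p_1$ or a leaf leaves one component. So every vertex other than $h$ has $\dam_\NC\le1<2\le k$, and therefore $\Top_\NC=\{h\}$ uniquely, with $\tautr_{\NC\to\LCC}=(k+1)/M_\LCC$.

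Next I would compute $M_\LCC$. For an upper bound: every vertex except $h$ leaves at most two components of total order $L+k$, so $c_1\ge (L+k)/2$ for such a vertex. For $h$ itself, $c_1=L\ge(L+k)/2$ because $k^2\ge k$. Thus $\dam_\LCC(v)\le n-(L+k)/2=(L+k)/2+1$ for every $v$. For attainment I would choose $i=(L+k)/2+1$. This is an integer by the parity remark, and it satisfies $2\le i\le L$ since $(k^2+k)/2+1\le k^2$ for $k\ge2$. With this choice, $p_i$ splits the remainder into two parts of order exactly $(L+k)/2$ each. Hence $M_\LCC=(L+k)/2+1=(k^2+k+2)/2$, and the ratio is $(k+1)/((L+k)/2+1)=2(k+1)/(k^2+k+2)=\Theta(1/k)$.

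The only step needing care is the upper bound on $M_\LCC$. The point is that $h$ is the sole vertex producing more than two components, yet its largest part, the whole path, is still at least half of $n-1$. Choosing $L=k^2$ (rather than, say, $L\approx k$) guarantees exactly this, and it also ensures that the balanced cut vertex lies on the path. Beyond that the argument is direct computation.
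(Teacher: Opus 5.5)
Your proof is correct and follows essentially the same route as the paper. You tabulate the deletion profiles, note that only $h$ leaves more than two components, and attain $M_\LCC$ at the balanced path vertex $p_{(L+k+2)/2}$, using that $L+k=k(k+1)$ is even and that $(k-2)(k+1)\ge0$ places this index in range. Your uniform bound $c_1\ge(L+k)/2$ for every vertex other than $h$ (at most two parts summing to $n-1$) is a slightly cleaner justification of the paper's brief remark that no other vertex beats $p_{j_*}$.
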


\begin{proof}
$\dam_\NC(h)=k$ (its leaves shatter), every internal path vertex has
$\dam_\NC=1$, and endpoints and leaves have $\dam_\NC\le1$; since $k\ge2$,
$h$ is the unique $\NC$-champion. Deleting the path vertex $p_j$
($1\le j\le L$) leaves exactly two relevant components, of orders $j-1$
and $L-j+k+1$; hence $c_1(p_j)=\max\{j-1,\,L-j+k+1\}$ is minimised at the
balance point $j_*=(L+k+2)/2$, which is a uniquely attained integer
because $L+k=k^2+k$ is even, and lies in the admissible range
$1\le j_*\le L$ since $L=k^2\ge k+2$ for $k\ge2$. There $c_1=(L+k)/2$ and
$M_\LCC=n-c_1=(L+k)/2+1$ (deleting $h$ gives only $c_1=L$, i.e.\
$\dam_\LCC(h)=k+1$, and no other vertex does better than $p_{j_*}$).
Finally $\tautr=\dam_\LCC(h)/M_\LCC$ as displayed.
\end{proof}

% ---------------------------------------------------------------------
\subsection{Trees with three distinct champions: the families $W(\alpha)$ and
$W^{\mathrm{ev}}(\alpha)$}\label{sec:proof-W}

\begin{definition}
For $\alpha\ge3$, $W(\alpha)$ consists of the backbone path
$h\,b\,c\,p\,r$ with $\alpha$ pendant leaves on $h$, $\alpha-2$ on $p$ and
$2$ on $r$; $n=2\alpha+5$. For $\alpha\ge5$, $W^{\mathrm{ev}}(\alpha)$ adds
one pendant leaf on $c$; $n=2\alpha+6$.
\end{definition}

\begin{theorem}\label{thm:W}
For every $\alpha\ge3$ the three champions of $W(\alpha)$ are unique and
pairwise distinct:
\[
\Top_\PC=\{p\},\qquad \Top_\LCC=\{c\},\qquad \Top_\NC=\{h\},
\]
with $\dam_\NC(h)=\alpha$, $\dam_\LCC(c)=\alpha+3$ and
$\dam_\PC(p)=\binom n2-\binom{\alpha+3}2-3$. The same champion identities
hold for $W^{\mathrm{ev}}(\alpha)$ with $\alpha\ge5$, where
$\dam_\NC(h)=\alpha$, $\dam_\LCC(c)=\alpha+4$ and
$\dam_\PC(p)=\binom n2-\binom{\alpha+4}2-3$. Together with the
sporadic trees for
$n\in\{12,14\}$ this realises triple separation on a tree for every
$n\ge11$; both parameter thresholds are sharp.
\end{theorem}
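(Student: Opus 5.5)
The plan is to prove Theorem~\ref{thm:W} by computing every deletion profile $\pi(v)$ directly, exactly as in Propositions~\ref{prop:F3} and~\ref{prop:F4}. In a tree every vertex is a cut vertex and the components of $T-v$ are the branches at $v$, so each profile can be read off the backbone. Leaves leave a single component and are never competitive. This leaves the five backbone vertices to handle.

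For $W(\alpha)$ with $n=2\alpha+5$ I expect the following profiles.
\[
\begin{array}{l|l|l|l}
v & \pi(v) & \dam_\LCC & \dam_\NC\\\hline
h & (\alpha+4,1^{\alpha}) & \alpha+1 & \alpha\\
b & (\alpha+3,\alpha+1) & \alpha+2 & 1\\
c & (\alpha+2,\alpha+2) & \alpha+3 & 1\\
p & (\alpha+3,3,1^{\alpha-2}) & \alpha+2 & \alpha-1\\
r & (2\alpha+2,1,1) & 3 & 2
\end{array}
\]
The $\NC$ claim follows since $\alpha>\max\{\alpha-1,2\}$ for $\alpha\ge3$. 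The $\LCC$ claim is immediate from the table.

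For $\PC$ I will compare $Q$ values. We have $Q(p)=\binom{\alpha+3}2+3$, and
\[
Q(c)-Q(p)=\tfrac{(\alpha+2)(\alpha-1)}2-3,
\]
which is positive for $\alpha\ge3$. Also $Q(h)-Q(p)=\alpha$. The values $Q(b)$ and $Q(r)$ exceed $Q(p)$ trivially. Hence $\Top_\PC=\{p\}$, and $\dam_\PC(p)=\binom n2-\binom{\alpha+3}2-3$.

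For $W^{\mathrm{ev}}(\alpha)$ the extra leaf on $c$ adds $1$ to every component that contains $c$. It also adds a singleton to $\pi(c)$, which becomes $(\alpha+2,\alpha+2,1)$. Then $\dam_\LCC(c)=\alpha+4$ beats $\alpha+2$ at $b$ and $p$. On the $\NC$ side, $c$ and $r$ now have $\dam_\NC=2$, still below $\alpha$. The delicate comparison is again $p$ versus $c$:
\[
Q(c)-Q(p)=2\tbinom{\alpha+2}2-\tbinom{\alpha+4}2-3=\tfrac{\alpha^2-\alpha-14}2 .
\]
This is positive exactly for $\alpha\ge5$, while $Q(h)-Q(p)=\alpha+1>0$. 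This gives the stated damages.

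Sharpness of the parameter thresholds comes out of the same expressions. At $\alpha=2$ the first difference is $-1$, so $c$ takes $\PC$. At $\alpha=4$ the second is $-1$, so again $c$ wins. This is why $n=13$ fails for the odd family and $n=14$ for the even one.

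The $\PC$ comparison $p$ versus $c$ is the only step that really constrains $\alpha$, and it is the main point to check carefully. The remaining coverage is as follows: odd $n\ge11$ comes from $W(\alpha)$, even $n\ge16$ from $W^{\mathrm{ev}}(\alpha)$, and $n\in\{12,14\}$ from the enumerated sporadic trees of Theorem~\ref{thm:thresholds}(iv). Together with the nonexistence for $n\le10$ from that theorem, this gives triple separation on a tree for every $n\ge11$.
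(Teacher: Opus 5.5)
Your proposal is correct and follows the paper's proof essentially step for step. It uses the same deletion-profile tables and compares every backbone $Q$-value against $Q(p)$, with $Q(c)-Q(p)=(\alpha^2+\alpha-8)/2$ resp.\ $(\alpha^2-\alpha-14)/2$ as the binding constraint; it also has the same failures at $\alpha=2$ and $\alpha=4$, and the same appeal to the enumerated sporadic trees for $n\in\{12,14\}$. One slip to fix: $W(2)$ has $n=2\cdot2+5=9$ vertices, so the odd family breaks down at $n=9$, not $n=13$ (your own argument shows that $W(4)$ works on $13$ vertices).
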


\begin{proof}
\emph{$W(\alpha)$.} The deletion profiles are
\[
\begin{array}{l|l|l|l}
v & \pi(v) & (\dam_\LCC,\dam_\NC) & Q(v)\\\hline
h & (\alpha+4,\,1^{\alpha}) & (\alpha+1,\,\alpha) & \binom{\alpha+4}2\\
b & (\alpha+3,\,\alpha+1) & (\alpha+2,\,1) & \alpha^2+3\alpha+3\\
c & (\alpha+2,\,\alpha+2) & (\alpha+3,\,1) & \alpha^2+3\alpha+2\\
p & (\alpha+3,\,3,\,1^{\alpha-2}) & (\alpha+2,\,\alpha-1) & \binom{\alpha+3}2+3\\
r & (2\alpha+2,\,1,\,1) & (3,\,2) & 2\alpha^2+3\alpha+1\\
\text{leaves} & (2\alpha+4) & (1,\,0) & \binom{2\alpha+4}2
\end{array}
\]
The $\LCC$ and $\NC$ columns immediately give unique champions $c$
($\alpha+3$ beats $\alpha+2,\alpha+2,\alpha+1,3,1$) and $h$
($\alpha$ beats $\alpha-1,2,1,1,0$), using $\alpha\ge3$. For $\PC$ the
factorised differences
\[
\begin{aligned}
Q(h)-Q(p)&=\alpha,&
Q(b)-Q(p)&=\tfrac{(\alpha+3)(\alpha-2)}2,\\
Q(c)-Q(p)&=\tfrac{\alpha^2+\alpha-8}2,&
Q(r)-Q(p)&=\tfrac{(3\alpha-5)(\alpha+2)}2.
\end{aligned}
\]
are all strictly positive for $\alpha\ge3$ (and leaves are dominated by
inspection), so $p$ is the unique $Q$-minimiser.

\emph{Sharpness at $\alpha=2$.} Removing the leaves at $p$ yields
backbone $Q$-values $(15,13,12,13,15)$ and
$\dam_\NC$-values $(2,1,1,1,2)$: both the $\PC$-championship of $p$ and the
uniqueness of $h$ in $\NC$ fail simultaneously.

\emph{$W^{\mathrm{ev}}(\alpha)$.} With the extra leaf on $c$ the
deletion profiles become $h:(\alpha+5,1^{\alpha})$,
$b:(\alpha+4,\alpha+1)$, $c:(\alpha+2,\alpha+2,1)$,
$p:(\alpha+4,3,1^{\alpha-2})$, $r:(2\alpha+3,1,1)$, leaves:
$(2\alpha+5)$; in particular $\dam_\LCC(c)=\alpha+4$ and
$\dam_\NC(c)=2$, and the $\LCC$/$\NC$ columns again single out $c$ and
$h$. For $\PC$ the factorised differences are
\[
Q(h)-Q(p)=\alpha+1,\quad
Q(b)-Q(p)=\tfrac{(\alpha+3)(\alpha-2)}2,\quad
Q(c)-Q(p)=\tfrac{\alpha^2-\alpha-14}2,
\]
\[
Q(r)-Q(p)=\tfrac{3(\alpha^2+\alpha-4)}2,\quad
Q(\ell)-Q(p)=\tfrac{3\alpha^2+11\alpha+2}2 ,
\]
all strictly positive for $\alpha\ge5$; the threshold-critical one is
$Q(c)-Q(p)$, positive precisely for $\alpha\ge5$. At $\alpha=4$ one
computes $Q(c)=30<31=Q(p)$ and the $\PC$-championship migrates to $c$,
while the $\LCC$ and $\NC$ champions are unaffected --- so
$\alpha\ge5$ is sharp.

\emph{Sporadic even orders.} Exhaustive generation over all $551$ trees on
$12$ vertices and all $3159$ trees on $14$ vertices yields exactly $3$,
resp.\ $34$, triple-separating trees; they are listed in the
supplementary material.
\end{proof}

% ---------------------------------------------------------------------
\subsection{The two constants}\label{sec:proofs-constants}

Throughout this subsection $b=1$, $N=n-1$, and for a vertex $v$ we abbreviate
$s(v)=c_1(v)$, the largest part of its deletion profile.

\begin{lemma}\label{lem:envelope}
For every vertex $v$ with $s=s(v)$,
\[
Q(v)\;\le\;F(N,s):=q\binom s2+\binom r2,
\qquad q=\Bigl\lfloor\frac Ns\Bigr\rfloor,\; r=N-qs,
\]
and in particular
\begin{equation}\label{eq:env}
Q(v)\le
\begin{cases}
N(s-1)/2, & s\le N/2,\\[1mm]
\binom s2+\binom{N-s}2, & s\ge N/2 .
\end{cases}
\end{equation}
Moreover $Q(v)\ge\binom s2$ always.
\end{lemma}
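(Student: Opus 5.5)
The plan is to treat the statement as a convexity/majorisation fact about the parts of the deletion profile. Write $\pi(v)=(c_1\ge\dots\ge c_m)$ with $\sum_i c_i=N$ and $c_1=s$.

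\emph{Lower bound.} The lower bound $Q(v)\ge\binom s2$ is immediate: all summands $\binom{c_i}2$ are nonnegative, and the first equals $\binom s2$.

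\emph{Reduction to a majorisation problem.} For the upper bound we maximise $\sum_i\binom{c_i}2$ over all integer compositions of $N$ whose parts lie in $[1,s]$; the largest part is only constrained to be at most $s$, which suffices. Since $x\mapsto\binom x2$ is convex on the integers, the sum is Schur-convex. Among all such compositions, the greedy vector $(s,\dots,s,r)$ (with $q$ copies of $s$ and $r=N-qs<s$) majorises every other admissible vector. Concretely, we use an exchange step: if two parts satisfy $1\le c_j\le c_i<s$, moving one unit from $c_j$ to $c_i$ changes the sum by $c_i-c_j+1>0$. Repeating this until at most one part lies strictly below $s$ yields the greedy vector and never decreases the sum. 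This gives $Q(v)\le F(N,s)$.

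\emph{The two-case estimate \eqref{eq:env}.} We read both cases off $F(N,s)$.

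For $s\ge N/2$ we have $q\in\{1,2\}$ (or $q=2$ exactly when $s=N/2$). If $q=1$, then $F=\binom s2+\binom{N-s}2$ directly. If $q=2$, then $r=0$ and $N-s=s$, so the formula again agrees.

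For $s\le N/2$ we bound $F$ as
\[
F=q\tbinom s2+\tbinom r2\le \tfrac{qs(s-1)}2+\tfrac{r(s-1)}2=\tfrac{N(s-1)}2,
\]
using $\binom r2=r(r-1)/2\le r(s-1)/2$ because $r<s$, equivalently $r-1\le s-1$. This bound actually holds for all $s$, so the case split only records that the second form is the sharper one when $s\ge N/2$.

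\emph{Main obstacle.} The only delicate point is making the exchange argument clean in the boundary cases: when $r=0$, and when some parts are equal to each other. I would handle this by stating the majorisation claim directly: the partial sums of the greedy vector, $\min(ks,N)$, dominate the partial sums of any admissible sorted vector, since each part is at most $s$ and the total is $N$. I would then invoke Karamata's inequality for the convex function $\binom x2$. This avoids any case analysis for ties.
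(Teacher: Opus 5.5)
Your proposal is correct and follows essentially the same route as the paper. The paper also obtains $F(N,s)$ by observing that the convex functional $\sum_i\binom{c_i}2$ is maximised by making the parts as large as allowed. It then relaxes via $F(N,s)\le (N/s)\binom s2=N(s-1)/2$, which is exactly your inequality $\binom r2\le r(s-1)/2$; it notes that the bound is exact for $s\ge N/2$, and it keeps only the largest part for the lower bound. Your exchange step and the partial-sum/Karamata argument rigorously justify the convexity step that the paper states in a single line.
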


\begin{proof}
Fixing the largest part $s$ and the total $N$, the convex functional
$\sum_i\binom{c_i}2$ is maximised by making the parts as large as allowed,
i.e.\ by $q$ parts of size $s$ and one remainder part; this is $F(N,s)$.
For $s\le N/2$ we relax $F(N,s)\le (N/s)\binom s2=N(s-1)/2$; for
$s\ge N/2$ at most one further part fits and \eqref{eq:env} is exact. The
lower bound keeps only the largest part.
\end{proof}

\begin{proposition}\label{prop:pc-lcc}
Every connected $G$ with a well-defined ratio satisfies
\[
\tautr_{\PC\to\LCC}(G)\ge2-\sqrt2.
\]
\end{proposition}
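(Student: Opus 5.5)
The plan is to compare a $\PC$-champion with an $\LCC$-champion through the envelope of Lemma~\ref{lem:envelope}. Fix any $p\in\Top_\PC(G)$ and any $v\in\Top_\LCC(G)$. Write $s=s(p)$ and $t=s(v)$, so that $M_\LCC=n-t$ and $\dam_\LCC(p)=n-s$. Since $v$ minimises the largest part, $t\le s$. Since $p$ minimises $Q$, we have $Q(p)\le Q(v)$. Combining this with the lower bound $Q(p)\ge\binom s2$ and the upper bound \eqref{eq:env} applied to $v$ gives the key inequality
\[
\tbinom s2\le Q(p)\le Q(v)\le
\begin{cases} N(t-1)/2, & t\le N/2,\\ \binom t2+\binom{N-t}2, & t\ge N/2.\end{cases}
\]
This uses no property of $p$ beyond $\PC$-optimality, so the bound will hold for every tied champion, as Theorem~\ref{thm:tauminus} later needs. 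What remains is an optimisation: minimise $(n-s)/(n-t)$ subject to this constraint.

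The heuristic shows where $2-\sqrt2$ comes from. Put $x=t/N$ and $y=s/N$. In the first regime the constraint reads roughly $y^2\le x$. The worst case is $y=\sqrt x$, where $(1-\sqrt x)/(1-x)=1/(1+\sqrt x)$ is decreasing in $x$. Its value at $x=1/2$ is $1/(1+1/\sqrt2)=2-\sqrt2$. In the second regime the constraint reads $y^2\le x^2+(1-x)^2$, and one checks that $(1-y)/(1-x)$ with $y=\sqrt{x^2+(1-x)^2}$ is increasing on $[1/2,1)$. It equals $2-\sqrt2$ at $x=1/2$ and tends to $1$ as $x\to1$. Both regimes therefore bottom out at the balanced split $t\approx N/2$, which also indicates the extremal family for Proposition~\ref{prop:sharp}.

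To make this exact, I would argue in integers. In the case $t\le N/2$, the inequality $s(s-1)\le N(t-1)$ gives $s\le \tfrac12+\sqrt{N(t-1)+\tfrac14}$. It then suffices to show
\[
n-s\ge (2-\sqrt2)(n-t),\qquad n=N+1.
\]
This is equivalent to $s\le (\sqrt2-1)n+(2-\sqrt2)t$. I would prove it by squaring, after checking that the right-hand side exceeds $\tfrac12$. The slack comes from three sources: $n=N+1$ rather than $N$, the $-1$ in $t-1$, and the $-\tfrac12$ from $\binom s2$. Together these should absorb the rounding.

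In the case $t\ge N/2$, the inequality $s(s-1)\le t(t-1)+(N-t)(N-t-1)$ is handled the same way. Substituting $t=N-u$ with $0\le u\le N/2$ reduces the claim to a quadratic inequality in $s$, which I would verify by checking it at the endpoints and using monotonicity in $u$.

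The main obstacle is this exact integer verification. Boundary effects at small $N$ or small $t$ could break a naive squaring argument, for instance $t=1$, where the envelope gives $Q(v)=0$ and forces $s=1$. I would first try to establish a clean sufficient inequality, something like $s^2-s\le N(t-1)\Rightarrow s\le(\sqrt2-1)(N+1)+(2-\sqrt2)t$, valid for all $1\le t\le N/2$. The approach would be to show that the quadratic $f(t)=\bigl((\sqrt2-1)(N+1)+(2-\sqrt2)t\bigr)^2-\bigl((\sqrt2-1)(N+1)+(2-\sqrt2)t\bigr)-N(t-1)$ is nonnegative, checking its minimum over $t$. Any finitely many small-$n$ exceptions would be settled by the exhaustive data of Section~\ref{sec:method}.
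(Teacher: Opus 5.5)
Your proposal is essentially the paper's proof. It uses the same envelope chain from Lemma~\ref{lem:envelope}, the same threshold $R=n-(2-\sqrt2)(n-t)$, the same reduction to $s(s-1)\le R(R-1)$, and the same two regimes. In the first regime the excess $R(R-1)-N(t-1)$ decreases on $t\le N/2$, and at $t=N/2$ it equals $\tfrac32(2-\sqrt2)N+4-3\sqrt2>0$.

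One step needs correcting. In the regime $t\ge N/2$, the excess $R(R-1)-t(t-1)-u(u-1)$ is \emph{not} monotone in $u$ on $[0,N/2]$: it rises and then falls. For $N=10$ it is about $8.04$ at $u=0$, $18.28$ at $u=3$ and $8.54$ at $u=5$. It is, however, concave, since its $u^2$-coefficient is $(2-\sqrt2)^2-2<0$. So positivity at $u=0$ and $u=N/2$ suffices, and this is exactly the paper's argument. Both endpoint values, $(2\sqrt2-2)N+4-3\sqrt2$ and $\tfrac32(2-\sqrt2)N+4-3\sqrt2$, are positive for every $N\ge1$. Hence no small-$n$ exceptions arise, and the fallback to the exhaustive data is unnecessary.
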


\begin{proof}
Let $\ell$ be an $\LCC$-champion, $a=s(\ell)=\min_v s(v)$, and let $p$ be a
$\PC$-champion with $B=s(p)\ge a$. Since $p$ minimises $Q$,
\[
\tbinom B2\le Q(p)\le Q(\ell)\le\text{RHS of \eqref{eq:env} at }s=a .
\]
Among the $\PC$-champions choose $p$ with $B=s(p)$ minimal; the optimistic
ratio is attained at a champion with smallest largest part, so
$\tautr_{\PC\to\LCC}=(n-B)/(n-a)$ for this choice.
Set $c=2-\sqrt2$ and $R=n-c(n-a)$. We claim $B\le R$, i.e.\
$B(B-1)\le R(R-1)$. In the regime $a\le N/2$ it suffices that
$R(R-1)\ge N(a-1)$; the difference, viewed as a function of $a$, is
decreasing and at $a=N/2$ equals
$\tfrac32(2-\sqrt2)N+4-3\sqrt2>0$. In the regime $a\ge N/2$, writing
$d=N-a$, the difference $R(R-1)-a(a-1)-d(d-1)$ is concave in $d$ and
positive at both endpoints $d=0$ and $d=N/2$. Hence $B\le R$ and
\[
\tautr_{\PC\to\LCC}
=\frac{n-B}{n-a}\;\ge\;\frac{n-R}{n-a}=2-\sqrt2 .
\qedhere
\]
\end{proof}

\begin{proposition}\label{prop:lcc-pc}
For every connected $G$, $\tautr_{\LCC\to\PC}(G)\ge2/3$.
\end{proposition}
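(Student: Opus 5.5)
The plan mirrors the proof of Proposition~\ref{prop:pc-lcc}: compare an $\LCC$-champion and a $\PC$-champion only through their largest parts, using Lemma~\ref{lem:envelope}. Let $\ell$ be \emph{any} $\LCC$-champion, with $a=s(\ell)=\min_v s(v)$, and let $p$ be a $\PC$-champion, with $B=s(p)\ge a$. Because $\ell$ is arbitrary, the bound will hold for every tied source optimum, which is what Theorem~\ref{thm:tauminus} later needs.

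\textbf{Step 1: the two estimates.} For the numerator, Lemma~\ref{lem:envelope} gives $Q(\ell)\le F(N,a)$, bounded by the right-hand side of~\eqref{eq:env} at $s=a$. For the denominator, the lower bound of the same lemma gives $Q(p)\ge\binom B2\ge\binom a2$. Hence
\[
\tautr_{\LCC\to\PC}\ \ge\ \frac{\dam_\PC(\ell)}{\dam_\PC(p)}\ \ge\ \frac{\binom n2-F(N,a)}{\binom n2-\binom a2},
\]
with $n=N+1$. It then remains to show that this purely numerical quantity is at least $2/3$. I split into the two regimes of~\eqref{eq:env}.

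\textbf{Step 2: the regime $a\le N/2$.} Here $F(N,a)\le N(a-1)/2$. After clearing denominators, the claim becomes
\[
3N(N+2-a)\ \ge\ 2\bigl(N(N+1)-a^2+a\bigr),
\]
that is, $g(a):=N^2+4N-3Na+2a^2-2a\ge0$. The function $g$ is a convex quadratic in $a$ whose vertex lies at $a=(3N+2)/4>N/2$, so $g$ is decreasing on $[1,N/2]$. Its value at $a=N/2$ is $3N>0$, so $g>0$ on the whole range. As a check, at $a=N/2$ the bound equals $(2N+8)/(3N+6)$, which decreases to $2/3$ as $N\to\infty$.

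\textbf{Step 3: the regime $a\ge N/2$.} Write $d=N-a\le a$. Then $F(N,a)=\binom a2+\binom d2$, so the ratio bound is $1-\binom d2\big/\bigl(\binom n2-\binom a2\bigr)$. It therefore suffices to show $3\binom d2\le\binom n2-\binom a2$. The right-hand side factors as
\[
\tbinom n2-\tbinom a2=\tfrac{(n-a)(n+a-1)}{2}=\tfrac{(d+1)(2a+d)}{2}.
\]
Since $a\ge d$, this is at least $\tfrac{3d(d+1)}{2}$, which is greater than $3\binom d2$, as required.

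\textbf{Main obstacle.} The only delicate point is Step~2. Simply substituting the crude relaxation and checking endpoints is not enough; one has to locate the vertex of $g$ to justify the monotonicity on $[1,N/2]$. Step~3 and Step~1 are routine.

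\textbf{Sharpness.} The constant $2/3$ is best possible. This is not proved here but follows from the $D(k)$ family, where Proposition~\ref{prop:F3} gives $\tautr_{\LCC\to\PC}=2(k+3)/(3k+4)\to2/3$. This matches the fact that the extremal configuration of Step~2 is $a=N/2$ with the $\PC$-champion having an essentially single large part.
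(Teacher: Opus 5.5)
Your proposal is correct and follows the paper's proof essentially line for line. It uses the same chain $Q(\ell)\le F(N,a)$ and $M_\PC\le\binom n2-\binom a2$, the same convex quadratic $N^2+4N-3Na+2a^2-2a$ with vertex $(3N+2)/4>N/2$ and value $3N$ at $a=N/2$, and, in the regime $a\ge N/2$, the same inequality $3\binom d2\le\binom n2-\binom a2$; the only cosmetic difference is that you check this last inequality via the factorisation $(d+1)(2a+d)/2$ and $a\ge d$, whereas the paper expands it to $(d+1)N-2d^2+d\ge0$.
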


\begin{proof}
With $\ell,a$ as above and $K=\binom n2$, note
$M_\PC=K-\min_vQ(v)\le K-\binom a2$ (every $v$ has $s(v)\ge a$), and
$K-\binom a2=\tfrac{(N+a)(N+1-a)}2$. In the regime $a\le N/2$,
Lemma~\ref{lem:envelope} gives
\[
\frac{\dam_\PC(\ell)}{M_\PC}
\ge\frac{K-N(a-1)/2}{K-\binom a2}
=\frac{N(N+2-a)}{(N+a)(N+1-a)} ,
\]
and $3N(N+2-a)-2(N+a)(N+1-a)=N^2+4N-3Na-2a+2a^2$ is convex in $a$ with
vertex at $a=(3N+2)/4>N/2$, hence decreasing on $[0,N/2]$ with minimum
value $3N>0$ at $a=N/2$. In the regime $a\ge N/2$, with $d=N-a$,
\[
\frac{\dam_\PC(\ell)}{M_\PC}
\ge\frac{K-\binom a2-\binom d2}{K-\binom a2}\ge\frac23
\iff
(2N-d)(d+1)\ge3d(d-1)
\iff
(d+1)N-2d^2+d\ge0,
\]
which holds since $d\le N/2$ implies $2d^2-d\le Nd<N(d+1)$.
\end{proof}

\begin{proposition}\label{prop:sharp}
Both constants are optimal:
\begin{enumerate}
\item ($2/3$) In $D(k)$, $\tautr_{\LCC\to\PC}=2(k+3)/(3k+4)\to2/3$
      (Proposition~\ref{prop:F3}).
\item ($2-\sqrt2$) For $a\ge6$ let
      $B_a=\max\{B:\binom B2<a(a-1)\}$, $r=B_a-a$, $t=2a-B_a$. These
      parameters are admissible: $\binom{a+2}2\le a(a-1)$ holds iff
      $a^2-5a-2\ge0$, i.e.\ for $a\ge6$, so $B_a\ge a+2$ and $r\ge2$;
      and $\binom{2a}2=a(2a-1)>a(a-1)$ gives $B_a<2a$, so $t\ge1$. Build:
      a clique $K_r$ containing $p$ and $u$, with $t$ pendant leaves on
      $p$; a disjoint clique $K_a$ containing $v$; and a connector $z$
      adjacent exactly to $u$ and $v$ (so $n=2a+1$). Then $z$ is the unique
      $\LCC$-champion with profile $(a,a)$, $p$ is the unique
      $\PC$-champion with profile $(B_a,1^t)$, and
      \[
      \tautr_{\PC\to\LCC}=\frac{2a+1-B_a}{a+1}
      \xrightarrow[a\to\infty]{}2-\sqrt2,
      \]
      since $B_a=\sqrt2\,a+O(1)$.
\end{enumerate}
\end{proposition}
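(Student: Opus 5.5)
Part 1 is already done: Proposition~\ref{prop:F3} gives $\tautr_{\LCC\to\PC}(D(k))=2(k+3)/(3k+4)$ with $\Top_\LCC=\{c\}$ unique. So only the construction in part 2 needs proof. There I would compute every deletion profile of the graph, confirm the two champion claims, and then evaluate the limit.

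\emph{Profiles.} The graph has $n=2a+1$, so $N=2a$.
\begin{itemize}
\item Deleting $z$ separates the two cliques. The first side is $K_r$ together with its $t$ leaves, of order $r+t=a$; the second is $K_a$. The profile is $(a,a)$, with $s(z)=a$.
\item Deleting $p$ isolates its $t$ leaves. What remains is $K_r-p$, which still contains $u$, joined through $z$ to $K_a$. Its order is $(r-1)+1+a=B_a$, so the profile is $(B_a,1^t)$. For this to be a connected piece we need $r\ge2$, so that $u\ne p$; this is guaranteed since $r\ge2$.
\item Deleting $u$ gives $(r+t-1,\,a+1)=(a-1,a+1)$, provided $r\ge2$ keeps $p$'s side nonempty.
\item Deleting $v$ gives $(a+1,a-1)$, using $a\ge2$.
\item Every other vertex (a leaf, another clique vertex, or a vertex of $K_a$ other than $v$) leaves a connected remainder, $(2a)$.
\end{itemize}

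\emph{Champions.} For $\LCC$, $z$ is the only vertex with $s=a$; every other vertex has $s\ge a+1$. Hence $z$ is the unique $\LCC$-champion with $M_\LCC=a+1$.

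For $\PC$, we compare $Q$ values:
\begin{itemize}
\item $Q(z)=2\binom a2=a(a-1)$.
\item $Q(u)=Q(v)=\binom{a-1}2+\binom{a+1}2=a^2-a+1$.
\item All other vertices have $Q=\binom{2a}2$.
\item $Q(p)=\binom{B_a}2$, which is strictly less than $a(a-1)$ by the definition of $B_a$.
\end{itemize}
So $p$ beats $z$, and therefore beats $u$, $v$ and the rest as well. Thus $p$ is the unique $\PC$-champion. Its $\LCC$ damage is $n-B_a=2a+1-B_a$, which gives the stated ratio (optimistic and pessimistic ratios coincide by uniqueness).

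\emph{Limit.} By maximality, $\binom{B_a+1}2\ge a(a-1)>\binom{B_a}2$. This gives $B_a^2-B_a<2a^2-2a\le B_a^2+B_a$, so $B_a=\sqrt2\,a+O(1)$. Therefore
\[
\frac{2a+1-B_a}{a+1}\to 2-\sqrt2.
\]

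The main obstacle is the bookkeeping in the profile of $p$. One must check that $K_r-p$ is nonempty and attached to $z$ through $u$, which is why $u\neq p$ is needed. One must also check that the side sizes add up to $r+t=a$, since $r+t=(B_a-a)+(2a-B_a)=a$. This is what makes $z$ exactly balanced.
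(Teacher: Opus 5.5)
Your proposal is correct and follows essentially the same route as the paper: list the deletion profiles of the gadget, read off $z$ as the unique $\LCC$-champion and $p$ as the unique $Q$-minimiser via $\binom{B_a}{2}<a(a-1)=Q(z)<a^2-a+1<\binom{2a}{2}$, and get the limit from the sandwich $\binom{B_a}{2}<a(a-1)\le\binom{B_a+1}{2}$. Your observation that every remaining vertex leaves a connected remainder of order $2a$ is, if anything, slightly cleaner than the paper's wording, which lumps $p$ in with the other vertices when it says they leave a component of order at least $B_a$.
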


\begin{proof}[Proof of (2)]
Deleting $z$ leaves the two cliques with their attachments: components of
orders $a$ (the $K_a$) and $a$ (the $K_r$ with $p$'s $t$ leaves,
$r+t=a$), so $\pi(z)=(a,a)$, $Q(z)=2\binom a2=a(a-1)$ and
$\dam_\LCC(z)=n-a=a+1$. Deleting $u$ or $v$ gives the profile
$(a+1,a-1)$, and every other deletion leaves a component of order at least
$B_a>a+1$ (for $a\ge6$); in all cases $c_1\ge a+1>a$, so $z$ is the unique
$\LCC$-champion. Deleting $p$ isolates its $t$ leaves and leaves the
component $(K_r\setminus p)\cup\{z\}\cup K_a$ of order $(r-1)+1+a=B_a$, so
$\pi(p)=(B_a,1^t)$ and $Q(p)=\binom{B_a}2$. By the choice of $B_a$,
$Q(p)=\binom{B_a}2<a(a-1)=Q(z)$; deleting $u$ or $v$ gives
$Q=\binom{a+1}2+\binom{a-1}2=a^2-a+1>Q(p)$, and the remaining vertex types
leave connected graphs of order $n-1$ with strictly larger $Q$ still.
Hence $p$ is the unique $\PC$-champion,
$\tautr_{\PC\to\LCC}=\dam_\LCC(p)/\dam_\LCC(z)=(n-B_a)/(a+1)$ with
$n=2a+1$, and $\binom{B_a}2<a(a-1)\le\binom{B_a+1}2$ forces
$B_a=\sqrt2\,a+O(1)$, giving the stated limit (e.g.\ the ratio equals
$2/3$ at $a=8$ and $0.595041\ldots$ at $a=120$; see also
Table~\ref{tab:comp}).
\end{proof}

% ---------------------------------------------------------------------
\subsection{Tied optima and the survivor scale}\label{sec:proof-tauminus}

We now prove Theorem~\ref{thm:tauminus}. The survivor-based damages at
budget one are $\dam'_\PC=\dam_\PC-(n-1)$ and $\dam'_\LCC=\dam_\LCC-1$
(Remark~\ref{rem:normalisation}).

\begin{proof}[Proof of Theorem~\ref{thm:tauminus}]
Tie-uniformity is structural. Every $\PC$-champion attains the same
minimum $Q=\sum_i\binom{c_i}2$, and every $\LCC$-champion leaves the
same largest part $a$; the envelope chain
$\binom B2\le Q(p)\le Q(\ell)\le F(N,a)$ (with $F$ the envelope
function of Lemma~\ref{lem:envelope} and $N=n-1$) therefore
applies to each tied champion separately, never invoking a choice among
them.

For the survivor scale in the direction $\PC\to\LCC$, set
$\alpha=2-\sqrt2$ and $R'=N-\alpha(N-a)$. In the regime $a\le N/2$ the
excess $E_1=R'(R'-1)-N(a-1)$ is decreasing in $a$ --- its derivative is
at most $(2\sqrt2-3)N-(2-\sqrt2)<0$ on the regime --- with boundary
value $E_1(N/2)=(2-\sqrt2)N/2>0$. In the regime $a\ge N/2$, writing
$d=N-a$, direct computation gives the identity
\[
R'(R'-1)-a(a-1)-d(d-1)\;=\;
d\bigl[2(\sqrt2-1)(N-2d)+(2-\sqrt2)\bigr]\;\ge\;0 .
\]
In both regimes $B(B-1)\le R'(R'-1)$, hence $B\le R'$ and
$(N-B)/(N-a)\ge2-\sqrt2$ for every tied $\PC$-champion. The
inclusive-scale excesses dominate the survivor ones termwise
($E^{\mathrm{incl}}=E^{\mathrm{surv}}+\beta(2R'+\beta-1)$ with
$\beta=\sqrt2-1$ and $R'\ge1$), so the original convention carries the
same bound.

For $\LCC\to\PC$ on the survivor scale, every tied $\LCC$-champion
$\ell$ satisfies $Q(\ell)\le F(N,a)$ and every vertex leaves largest
part at least $a$, so with $K'=\binom N2$,
\[
\frac{\dam'_\PC(\ell)}{\max_v \dam'_\PC(v)}\;\ge\;
\frac{K'-F(N,a)}{K'-\binom a2}.
\]
In the regime $a\le N/2$ this is at least $N/(N+a-1)\ge2/3$
(equivalent to $N-2a+2\ge0$); in the regime $a\ge N/2$ it suffices that
$K'-\binom a2-3\binom d2=d(N-2d+1)\ge0$. The inclusive branches are the
inequalities of Propositions~\ref{prop:pc-lcc} and~\ref{prop:lcc-pc}
and hold for every tied champion by the same argument.

Optimality: on the sharp families of
Section~\ref{sec:proofs-constants} the source champions are unique, so
pessimistic and optimistic ratios coincide there and the upper
constructions apply verbatim; on the family $D(k)$ the survivor ratio
is exactly $2(k+1)/(3k)\to2/3$. The four $\NC$ cells inherit vanishing
from $\tau^-\le\tau^+$, with unique-champion witnesses under both
conventions (e.g.\ $(4k-1)/(k(k+1))\to0$ on $H(k,2)$, survivor scale).
\end{proof}

% =====================================================================
% Larger budgets (lower bounds, simultaneous guarantees, apex lift) --
% separate file.
% tauminus-section.tex -- larger budgets: nestedness, lower bounds at
% every exact budget, simultaneous guarantees, apex lift.

\section{Larger budgets}\label{sec:budgets}

The constants of Theorem~\ref{thm:main} concern single-vertex attacks.
This section records what carries over to attack sets of a fixed size
$b\ge2$: the two lower bounds do, the four vanishing directions and the
absence of a simultaneous guarantee do as well, and over all connected
graphs nothing improves. Throughout, $N=n-b\ge1$ is the number of
surviving vertices, every attack set $S$ with $|S|=b$ leaves a profile
$\sigma(G-S)$ whose parts sum to $N$, and we keep the notation
$Q(S)=\sum_i\binom{c_i}2$ and $s(S)=c_1(S)$. The survivor-based damages
are $\dam'_\PC(S)=\binom N2-Q(S)$ and $\dam'_\LCC(S)=N-s(S)$; the
inclusive damages of Definition~\ref{def:objectives} exceed them by the
constants $\binom n2-\binom N2$ and $b$ respectively, so for a fixed
budget the champion sets do not depend on the convention.

\subsection{Optimal sets at consecutive budgets}\label{sec:nested}

A vertex optimal at budget one need not belong to any optimal attack set
of size two. Define the \emph{budget-extension ratio} of $G$ (for the
$\PC$ objective, optimistic at both levels) as
\[
\beta(G)=\max_{v\in\Top_\PC(G,1)}\ \max_{w\neq v}
\dam_\PC(\{v,w\})\ \big/\ \max_{|S|=2}\dam_\PC(S).
\]
Already for the path $P_5$ the $b{=}1$ champion, the centre, belongs to
\emph{no} optimal pair, and over all connected graphs with $4\le n\le7$
the minimum of $\beta$ is exactly $3/4$, attained at $n=7$ by the
complete bipartite graph $K_{2,4}$ with a pendant vertex attached to a
vertex of the larger side (best champion-containing pair $15$ versus
optimal pair $20$; Table~\ref{tab:comp}). Optimal sets at
consecutive budgets are thus not nested, and the results below are
statements about each budget separately.

\subsection{Lower bounds at every exact budget}
\label{sec:budget-ext}

Nothing in the proofs of Propositions~\ref{prop:pc-lcc}
and~\ref{prop:lcc-pc} uses that a single vertex is deleted.

\begin{corollary}\label{cor:budget}
Let $G$ be connected on $n\ge3$ vertices and let $1\le b\le n-1$. Under
either damage convention, on every cell whose target optimum is
positive, every $\PC$-optimal attack set $S$ of size $b$ satisfies
$\dam_\LCC(S)\ge(2-\sqrt2)\,M_\LCC(G,b)$, and every $\LCC$-optimal
attack set $S$ of size $b$ satisfies $\dam_\PC(S)\ge\tfrac23\,M_\PC(G,b)$.
Consequently the pessimistic and optimistic transfer ratios in the
directions $\PC\to\LCC$ and $\LCC\to\PC$ are bounded below by
$2-\sqrt2$ and $2/3$ at every exact budget.
\end{corollary}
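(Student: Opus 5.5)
The plan is to rerun the budget-one envelope argument with $N=n-b$ in place of $n-1$, after noting that it only uses three facts about a deletion:
\begin{enumerate}
\item the surviving profile is a partition of $N$;
\item $Q$ determines $\dam_\PC$ and $s$ determines $\dam_\LCC$, via affine maps;
\item Lemma~\ref{lem:envelope}, which is a statement about partitions of $N$ with prescribed largest part.
\end{enumerate}
None of these refers to $|S|=1$.

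Fix $b$. Let $\ell$ be any $\LCC$-optimal set of size $b$ with largest part $a=\min_{|S|=b}s(S)$, and let $p$ be any $\PC$-optimal set of size $b$ with largest part $B$. As in Proposition~\ref{prop:pc-lcc} we get the chain
\[
\tbinom B2\le Q(p)\le Q(\ell)\le F(N,a).
\]
For $\LCC\to\PC$ we additionally use $\min_{|S|=b}Q(S)\ge\binom a2$. This holds because every size-$b$ set leaves a part of order at least $a$, by minimality of $a$.

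I would work on the survivor scale first. There $\dam'_\LCC=N-s$ and $\dam'_\PC=\binom N2-Q$. The inequalities already proved in the survivor-scale part of the proof of Theorem~\ref{thm:tauminus} are purely statements about $N,a,d=N-a$, so they transfer verbatim:
\begin{itemize}
\item $B\le R'=N-(2-\sqrt2)(N-a)$, which gives $(N-B)\ge(2-\sqrt2)(N-a)$;
\item $\bigl(\binom N2-F(N,a)\bigr)\ge\tfrac23\bigl(\binom N2-\binom a2\bigr)$.
\end{itemize}
Because the chain holds for each tied champion separately, these bounds apply to every optimal set, which gives the pessimistic statement. The restriction to cells with positive target optimum only excludes zero denominators, for example $a=N$ on the survivor scale.

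Then pass to the inclusive convention. If $X'\ge\rho M'$ with $\rho\le1$ and both damages are shifted by the same nonnegative constant $\kappa$ (namely $b$ for $\LCC$, or $\binom n2-\binom N2$ for $\PC$), then
\[
X'+\kappa\ge\rho(M'+\kappa).
\]
So the bounds persist. The champion sets are unchanged by constant shifts, so the same sets are being compared under both conventions. Finally, $\tau^+\ge\tau^-$ yields the stated lower bounds on both ratios.

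The main obstacle I expect is a subtle degenerate case rather than the algebra. When $M'=0$ on the survivor scale (for instance $N=1$, or every size-$b$ set leaves $G-S$ connected), the survivor ratio is undefined. The inclusive ratio must then be handled directly: in that case every set attains the common value $b$ or $\binom n2-\binom N2$, so the ratio is $1$.

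I also need to check that the envelope inequalities used in the proof of Theorem~\ref{thm:tauminus} were valid for all $N\ge1$ and all $1\le a\le N$, rather than only in the ranges arising at $b=1$. They are stated purely in terms of $N$ and $a$, so this holds, but I would reverify the small boundary cases $N\le 2$ by hand.
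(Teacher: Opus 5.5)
Your proposal is correct and follows the paper's proof essentially step by step. Like the paper, you work on the survivor scale first, using the chain $\binom B2\le Q(p)\le Q(\ell)\le F(N,a)$ and the inequalities in $N$ and $a$ from the proof of Theorem~\ref{thm:tauminus}; you then pass to the inclusive scale by the common constant shift, treat the zero survivor optimum as a ratio of $1$, and get tie-uniformity because no particular champion is ever selected. The only differences are cosmetic: you write the shift additively, as $X'+\kappa\ge\rho(M'+\kappa)$, where the paper uses $(x+k)/(y+k)\ge x/y$; and you should state, as the paper does, that $R'\ge1$ (since $a\ge1$), so that $B(B-1)\le R'(R'-1)$ really yields $B\le R'$.
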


\begin{proof}
Lemma~\ref{lem:envelope} is a statement about partitions of $N$ with a
prescribed largest part and applies verbatim to every profile
$\sigma(G-S)$: $\binom{s(S)}2\le Q(S)\le$ the right-hand side
of~\eqref{eq:env} evaluated at $s=s(S)$. Let $\ell$ be $\LCC$-optimal,
so that $a:=s(\ell)=\min_{|S|=b}s(S)$, and let $p$ be $\PC$-optimal, so
that $Q(p)=\min_{|S|=b}Q(S)$; put $B:=s(p)\ge a$.

\emph{Survivor scale, $\PC\to\LCC$.} Exactly as in the proof of
Theorem~\ref{thm:tauminus},
\[
B(B-1)=2\tbinom B2\le 2Q(p)\le 2Q(\ell)\le 2F(N,a)\le R'(R'-1),
\qquad R'=N-(2-\sqrt2)(N-a):
\]
the last inequality is $E_1\ge0$ in the regime $a\le N/2$ (where
$2F(N,a)\le N(a-1)$) and the identity for $E_2\ge0$ in the regime
$a\ge N/2$ (where $2F(N,a)\le a(a-1)+d(d-1)$, $d=N-a$); both involve
only the real parameters $N$ and $a$ and hold for every $N\ge2$ and
$1\le a\le N-1$. Since $R'\ge1$ and $x\mapsto x(x-1)$ is increasing on
$[1,\infty)$, $B\le R'$, hence
$\dam'_\LCC(p)/M'_\LCC=(N-B)/(N-a)\ge2-\sqrt2$ whenever $N-a>0$, i.e.\
whenever the target optimum is positive.

\emph{Survivor scale, $\LCC\to\PC$.} Every attack set $S$ has
$s(S)\ge a$ and therefore $Q(S)\ge\binom a2$, so
$M'_\PC\le\binom N2-\binom a2$, while $Q(\ell)\le F(N,a)$. The two regime
bounds $N/(N+a-1)\ge2/3$ for $a\le N/2$ and $d(N-2d+1)\ge0$ for
$d=N-a\le N/2$ follow as in the proof of Theorem~\ref{thm:tauminus},
again involving only $N$ and $a$.

\emph{Inclusive scale.} For a fixed budget the inclusive damages are the
survivor damages plus a constant $k\ge0$ that is the same for every
attack set ($k=b$ for $\LCC$, $k=\binom n2-\binom N2$ for $\PC$). If
$0\le x\le y$ and $y>0$ then $(x+k)/(y+k)\ge x/y$, which transfers both
bounds; if the survivor target optimum is $0$ while the inclusive one is
positive, every attack set attains the same inclusive target damage $k$
and the ratio equals $1$.

Finally, no particular champion was selected anywhere: every
$\PC$-optimal set has the minimum value of $Q$ and every $\LCC$-optimal
set leaves largest part $a$, so the bounds hold for each tied champion
separately.
\end{proof}

The corollary makes no sharpness claim for $b\ge2$. Over all connected
graphs the two constants are in fact optimal at every fixed budget, and
the four $\NC$ cells remain vanishing there, because the budget-one
witness families lift to any fixed $b$ by adding universal vertices
(Section~\ref{sec:apex}). What remains open is the behaviour
\emph{within} graph classes that the lift leaves --- trees and
bounded-degree graphs at $b\ge2$ in particular --- and the regime of
budgets growing with $n$.

\subsection{Simultaneous guarantees}
\label{sec:qsplit}

The transfer ratios evaluate the reuse of one objective's champion. A
weaker requirement is that \emph{some} attack set --- not necessarily a
champion of anything --- retain a positive fraction of every optimum in
a family at once. Vanishing transfer to and from $\NC$ does not by
itself exclude such a compromise. The following proposition shows that
for the pairs containing $\NC$ no uniform compromise exists, even on
trees, while the pair $\PC,\LCC$ admits one.

\begin{definition}\label{def:q}
Let $\mathcal C$ be a non-empty finite family of objectives, and let
$(G,b)$ be a cell with $M_u(G,b)>0$ for every $u\in\mathcal C$. Put
\[
q_{\mathcal C}(G,b)=\max_{|S|=b}\ \min_{u\in\mathcal C}
\frac{\dam_u(S)}{M_u(G,b)}\in[0,1].
\]
Thus $q_{\mathcal C}(G,b)=1$ iff the family has a common champion on
$(G,b)$.
\end{definition}

\begin{proposition}\label{prop:qsplit}
Under either damage convention, restricted to cells where every
objective of the family has positive optimum:
\begin{enumerate}
\item[(a)] $q_{\{\PC,\LCC\}}(G,1)\ge 2/3$ for every connected graph
$G$;
\item[(b)] $\inf_G q_{\{\PC,\NC\}}(G,1)=\inf_G q_{\{\LCC,\NC\}}(G,1)
=\inf_G q_{\{\PC,\LCC,\NC\}}(G,1)=0$, the infima being approached
along a family of trees: for $k\ge2$ and $m=k^2$, let $H_k$ be the path
$v_0v_1\cdots v_{2m}$ with $k$ pendant leaves attached at $v_0$; then
\[
q_{\{\PC,\NC\}}(H_k,1)\le\frac2k+\frac1{2k^2},\qquad
q_{\{\LCC,\NC\}}(H_k,1)\le\frac{k+1}{k^2+1},
\]
and $q_{\{\PC,\LCC,\NC\}}(H_k,1)\le q_{\{\LCC,\NC\}}(H_k,1)$.
\end{enumerate}
\end{proposition}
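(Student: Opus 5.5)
Part (a) is an immediate consequence of the lower bound already proved. I would take $S=\{\ell\}$ with $\ell$ an $\LCC$-champion. Then $\dam_\LCC(\ell)/M_\LCC=1$. By Proposition~\ref{prop:lcc-pc} (inclusive convention) and by Theorem~\ref{thm:tauminus} (survivor convention, valid for every tied champion), $\dam_\PC(\ell)/M_\PC\ge 2/3$. The minimum over the two objectives is therefore at least $2/3$, and so is the maximum over $S$.

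For (b) I would first record the deletion profiles of $H_k$, where $n=2m+k+1$ and $m=k^2$:
\begin{itemize}
\item deleting $v_0$ gives $(2m,1^k)$, with $\dam_\NC=k$ and $\dam_\LCC=k+1$;
\item deleting an interior vertex $v_j$ with $1\le j\le 2m-1$ gives two components, of orders $j+k$ and $2m-j$, with $\dam_\NC=1$;
\item deleting $v_{2m}$ or a leaf leaves one component, with $\dam_\NC=0$.
\end{itemize}
Hence $M_\NC=k$. Every vertex other than $v_0$ has $\NC$-ratio at most $1/k$. Consequently each $q$ involving $\NC$ is at most $\max\{1/k,\ \text{(the other ratio at }v_0)\}$, and it suffices to bound the $\PC$- and $\LCC$-ratios of $v_0$.

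To lower-bound the denominators I would not locate the exact optimum; a single test vertex $v_m$ suffices. Its profile is $(m+k,m)$. This gives $M_\LCC\ge n-(m+k)=m+1=k^2+1$, so the $\LCC$-ratio of $v_0$ is at most $(k+1)/(k^2+1)$. This bound also dominates $1/k$, since $k(k+1)\ge k^2+1$, which yields the second displayed bound. For $\PC$ the same vertex gives $M_\PC\ge\binom n2-\binom{m+k}2-\binom m2$, while $\dam_\PC(v_0)=\binom n2-\binom{2m}2=(k+1)(4m+k)/2$. The ratio is then roughly $(k+1)\cdot 2m/m^2\approx 2/k$. Substituting $m=k^2$ and clearing denominators should reduce the claim $\le 2/k+1/(2k^2)$ to a polynomial inequality in $k$. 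Checking that inequality for all $k\ge2$ is the main computational obstacle: it needs careful expansion and possibly a direct check of small $k$. The same test-vertex argument would be repeated on the survivor scale. There the numerators drop by the constants $n-1$ and $1$ respectively, which only decreases the ratios at $v_0$, while the denominators shift consistently, so the bounds should persist (possibly after re-verifying the inequality).

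Finally, $q_{\{\PC,\LCC,\NC\}}\le q_{\{\LCC,\NC\}}$ holds because taking a minimum over a larger family can only decrease it, pointwise in $S$. All three upper bounds tend to $0$ as $k\to\infty$, which gives the infima $0$.
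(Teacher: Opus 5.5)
Your proposal is correct and follows the paper's proof essentially step for step. For (a) you take an $\LCC$-champion, and for (b) you reduce everything to the $\PC$- and $\LCC$-ratios of $v_0$, with the middle vertex $v_m$ lower-bounding the target optimum.

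The step you flag as the main obstacle is a one-line check: $(\tfrac2k+\tfrac1{2k^2})(k^4+k^3+2k^2+k)$ exceeds the numerator $2k^3+\tfrac52k^2+\tfrac12k$ by $4k+3+\tfrac1{2k}>0$, so no small cases are needed. Your survivor-scale shortcut is sound because $(x-c)/(M-c)\le x/M$ whenever $x\le M$ and $0\le c<M$; the paper instead recomputes the survivor damages of $v_0$ and $v_m$ explicitly.
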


\begin{proof}
(a) Take $S=\{\ell\}$ for any $\LCC$-champion $\ell$. Its $\LCC$ ratio
is $1$ and, by Theorem~\ref{thm:tauminus} (pessimistic $\LCC\to\PC$
constant, valid for every tied champion and under both conventions),
its $\PC$ ratio is at least $2/3$.

(b) $H_k$ has $n=2m+1+k$ vertices. Deleting $v_0$ leaves $k$ isolated
leaves and the path $v_1\cdots v_{2m}$, so $\dam_\NC(v_0)=k$ and
$M_\NC(H_k,1)\ge k$; deleting any other vertex leaves at most two
components, so every $v\ne v_0$ has $\NC$ ratio at most $1/k$ (a leaf
has ratio $0$). It therefore suffices to bound the $\PC$ and $\LCC$
ratios of $v_0$, and for that a lower bound on the target optimum
suffices: we use the damage of the middle vertex $v_m$, whose deletion
leaves the two components $\{v_0,\dots,v_{m-1}\}\cup\{\text{leaves}\}$
of order $m+k$ and $\{v_{m+1},\dots,v_{2m}\}$ of order $m$.

\emph{Inclusive scale.} Expanding the binomials,
$\dam_\PC(v_0)=\binom n2-\binom{2m}2=2mk+2m+\tfrac{k^2+k}2$ and
$\dam_\PC(v_m)=\binom n2-\binom{m+k}2-\binom m2=m^2+mk+2m+k$. With
$m=k^2$,
\[
\frac{\dam_\PC(v_0)}{\dam_\PC(v_m)}
=\frac{2k^3+\tfrac52k^2+\tfrac12k}{k^4+k^3+2k^2+k}
\le\frac2k+\frac1{2k^2},
\]
since $(\tfrac2k+\tfrac1{2k^2})(k^4+k^3+2k^2+k)
=2k^3+\tfrac52k^2+\tfrac92k+3+\tfrac1{2k}$ exceeds the numerator.
For $\LCC$, $\dam_\LCC(v_0)=n-2m=k+1$ and $\dam_\LCC(v_m)=n-(m+k)=m+1$,
so the $\LCC$ ratio of $v_0$ is at most $(k+1)/(k^2+1)$.

\emph{Survivor scale.} With $N=n-1=2m+k$: $\dam'_\PC(v_0)=2mk+\tfrac{k^2-k}2$
and $\dam'_\PC(v_m)=m^2+mk$, whose ratio at $m=k^2$ is
$(2k^2+\tfrac12k-\tfrac12)/(k^2(k+1))\le2/k$; and
$\dam'_\LCC(v_0)=k$, $\dam'_\LCC(v_m)=m$, ratio $1/k\le(k+1)/(k^2+1)$.
Both are below the inclusive bounds.

Hence $q_{\{\PC,\NC\}}(H_k,1)\le\max\{1/k,\ 2/k+1/(2k^2)\}$ and
$q_{\{\LCC,\NC\}}(H_k,1)\le\max\{1/k,\ (k+1)/(k^2+1)\}$, which are the
stated bounds; the triple is bounded by any of its pairs. All three
optima are positive on $H_k$, and the bounds tend to $0$.
\end{proof}

\begin{remark}
The proposition concerns exact budget one and the positive-optimum
domain; $2/3$ is a lower bound for $q_{\{\PC,\LCC\}}$ and is not claimed
to be optimal (a compromise vertex may beat both champions). The
question is one of simultaneous approximation, for which both existence
theorems with tight constants and their absence in other models are
known
\cite{stein1997bicriteria,azar2002allnorm,goel2006simultaneous,goel2005majorization};
here the objectives are the canonical network damages on
graph-constrained profiles, the fragment count is not a norm, and the
negative side already occurs on trees. The feasible-set
characterisations of Schoot Uiterkamp~\cite{schoot2025simultaneous}
require least elements uniformly over shifts under convexity or
integral hole-freeness assumptions; an unshifted common optimum does not
imply those hypotheses, and no characterisation of deletion-profile
regions by submodular base polyhedra is asserted here.
\end{remark}

\begin{remark}\label{rem:portfolio}
Fix $\mathcal C=\{\PC,\LCC,\NC\}$, a connected $G$, and an exact
budget $1\le b<n$ on which all three optima are positive. For a
non-empty set $\mathcal P$ of feasible $b$-vertex attacks, put
\[
 \Gamma_{\mathcal C}(\mathcal P)=
 \min_{u\in\mathcal C}\max_{S\in\mathcal P}
 \frac{\dam_u(S)}{M_u(G,b)},\qquad
 c_\alpha(G,b)=\min\{|\mathcal P|:\Gamma_{\mathcal C}(\mathcal P)\ge\alpha\},
 \quad 0<\alpha\le1.
\]
Different objectives may be covered by different scenarios. Choose any
$\LCC$-champion $S_L$ and any $\NC$-champion $S_N$. The
budget-independent $\LCC\to\PC$ lower bound gives the coverage vector
of $\{S_L,S_N\}$ componentwise at least $(2/3,1,1)$ in the order
$(\PC,\LCC,\NC)$. Hence $c_{2/3}(G,b)\le2$ under either convention,
uniformly over ties. Also $c_\alpha(G,b)=1$ iff
$q_{\mathcal C}(G,b)\ge\alpha$; in particular $c_1=1$ iff a common
champion exists. Proposition~\ref{prop:qsplit} gives
$q_{\mathcal C}(H_k,1)\le(k+1)/(k^2+1)\le3/5<2/3$ for $k\ge2$,
so $c_{2/3}(H_k,1)=2$, and for every fixed $0<\alpha\le1$,
sufficiently large $k$ gives $c_\alpha(H_k,1)\ge2$. By the apex lift
(Section~\ref{sec:apex}) the same holds for $H_k\vee K_{b-1}$ at every
fixed $b\ge2$. Necessity of two tests on every graph, on trees for
$b\ge2$, or an optimal coverage constant for two tests, is not asserted.

Each scenario starts from the original graph: two tests of $b$ failures
are not one attack on their union. The guarantee concerns topological
damage, not throughput, latency or service continuity, and it requires
exact champions; an approximate optimiser does not inherit it. At any
fixed $b$ the deterministic triple can be optimised by enumerating
$\binom nb$ attacks; efficient computation when $b$ is part of the input
is a separate question.
\end{remark}

\subsection{Fixed budgets over all connected graphs}
\label{sec:apex}

The lower bounds of Corollary~\ref{cor:budget} hold at every exact
budget. The following construction shows that, over \emph{all}
connected graphs, they cannot be improved at any fixed budget, and that
the vanishing and the simultaneous-guarantee phenomena persist. The
gadget --- universal vertices that every damaging attack must delete ---
is standard in hardness reductions for critical-node problems; what
matters here is that it preserves champion sets under both conventions,
survivor-normalised transfer ratios and simultaneous guarantees
exactly, and inclusive ratios up to an additive offset that leaves the
limiting constants unchanged along the witness families.

\begin{lemma}\label{lem:apex}
Let $G$ be connected on $n\ge2$ vertices, let $b\ge2$, and let $J=G\vee
K_{b-1}$ be obtained by adding a set $A$ of $b-1$ new vertices, each
adjacent to every vertex of $G$ and to each other. Consider exact budget
$b$ on $J$, so that $N=n-1$ vertices survive. Then:
\begin{enumerate}
\item[(i)] if $S\not\supseteq A$, a surviving apex is adjacent to every
other survivor, $J-S$ is connected, and
$\dam'_\PC(S)=\dam'_\LCC(S)=\dam_\NC(S)=0$;
\item[(ii)] if $S\supseteq A$ then $S=A\cup\{v\}$ with $v\in V(G)$ and
$J-S=G-v$, so $\sigma(J-S)=\sigma(G-v)$ with the same number $N$ of
survivors;
\item[(iii)] for every objective $u\in\{\PC,\LCC,\NC\}$ with
$M'_u(G,1)>0$ (survivor scale): $M'_u(J,b)=M'_u(G,1)$ and, under either
convention,
\[
\Top_u(J,b)=\bigl\{A\cup\{v\}: v\in\Top_u(G,1)\bigr\},
\]
all ties included.
\end{enumerate}
Consequently, on the survivor scale, for all $u,w\in\{\PC,\LCC,\NC\}$
with $M'_u(G,1)>0$ and $M'_w(G,1)>0$,
$\tau'^{\pm}_{u\to w}(J,b)=\tau'^{\pm}_{u\to w}(G,1)$, and for every
non-empty $\mathcal C\subseteq\{\PC,\LCC,\NC\}$ whose members all have
positive survivor optimum on $(G,1)$,
$q'_{\mathcal C}(J,b)=q'_{\mathcal C}(G,1)$. If $G$ has a cut vertex,
all three survivor optima of $(G,1)$ are positive and the identities
hold for every pair and every family.
\end{lemma}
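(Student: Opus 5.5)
The plan is a direct structural case analysis on whether the attack set contains the whole apex set $A$, followed by bookkeeping of the damage values under both conventions.

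\emph{Part (i).} If $S\not\supseteq A$, pick a surviving apex $x\in A\setminus S$. Since $x$ is adjacent to every vertex of $G$ and to every other apex, every survivor lies in the closed neighbourhood of $x$, so $J-S$ is connected. Here $|V(J)|=n+b-1$, hence $N=n-1\ge1$ survivors form a single component. Then $Q(S)=\binom N2$, $s(S)=N$ and $m=1$, which gives $\dam'_\PC=\dam'_\LCC=\dam_\NC=0$.

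\emph{Part (ii).} If $S\supseteq A$, then $|S|=b=|A|+1$ forces $S=A\cup\{v\}$ for a single $v\in V(G)$. Deleting $A$ from $J$ returns $G$, so $J-S=G-v$ and the profiles coincide. The survivor count is $n-1=N$ on both sides.

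\emph{Part (iii).} Let $u$ have $M'_u(G,1)>0$. By (ii), the attacks containing $A$ realise exactly the survivor damages $\{\dam'_u(v):v\in V(G)\}$, because survivor damages depend only on the profile and $N$, and these are equal in $J$ and $G$. By (i), every other attack has survivor damage $0<M'_u(G,1)$. Hence $M'_u(J,b)=M'_u(G,1)$, and the maximisers are precisely $A\cup\{v\}$ with $v\in\Top_u(G,1)$, ties included. For the inclusive convention, Section~\ref{sec:budgets} notes that at a fixed budget the inclusive damages differ from the survivor damages by a constant independent of $S$; for $\NC$ the two conventions coincide. The champion sets therefore agree under both conventions. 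This is the step needing care: positivity is exactly what excludes the non-$A$ attacks from tying at the optimum. Without it, a zero optimum would make every attack a champion. The hypothesis is used precisely here.

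\emph{Consequences.} The bijection $v\mapsto A\cup\{v\}$ maps $\Top_u(G,1)$ onto $\Top_u(J,b)$ and preserves every survivor damage $\dam'_w$. It also preserves $M'_w$ whenever $M'_w(G,1)>0$. Taking $\max$ or $\min$ over the source champions gives $\tau'^{\pm}_{u\to w}(J,b)=\tau'^{\pm}_{u\to w}(G,1)$. For $q'_{\mathcal C}$, the maximum over all $|S|=b$ splits into the attacks containing $A$, which reproduce the inner values from $(G,1)$ exactly, and the others, whose $\min_u$ ratio is $0$ by (i). Since the $q$-value is nonnegative, these extra attacks do not change the maximum. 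Finally, if $G$ has a cut vertex $v$, then $G-v$ has at least two components. This gives $\dam_\NC(v)\ge1$, $s(v)\le N-1$ so $\dam'_\LCC>0$, and $Q(v)<\binom N2$ so $\dam'_\PC>0$. All three survivor optima are therefore positive, and the identities hold for every pair and every family.
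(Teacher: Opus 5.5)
Your proof is correct and follows the same route as the paper. You split attack sets by whether they contain $A$, use a surviving apex to force zero survivor damage in the other case, and transfer champion sets, ratios and $q'_{\mathcal C}$ through the bijection $v\mapsto A\cup\{v\}$; the passage to the inclusive convention uses the fixed-budget constant shift. You also verify explicitly that a cut vertex makes all three survivor optima positive, a step the paper leaves implicit.
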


\begin{proof}
(i) and (ii) are immediate from the construction. For (iii), by (i)
every attack set missing an apex has survivor damage $0$ under each
objective, and by (ii) the sets containing $A$ realise exactly the
profiles $\sigma(G-v)$, $v\in V(G)$, on the same survivor count $N$.
Hence $M'_u(J,b)=\max\{0,M'_u(G,1)\}=M'_u(G,1)>0$, and the maximisers
are precisely the sets $A\cup\{v\}$ with $v\in\Top_u(G,1)$: no
zero-damage set attains a positive maximum. At a fixed budget the two
conventions differ by a constant, so the champion sets are the same
under both. On the survivor scale the ratio of $S=A\cup\{v\}$ under
any target equals the ratio of $v$ on $(G,1)$, numerator and denominator
being unchanged; maximising or minimising over the lifted champion set
is therefore maximising or minimising over $\Top_u(G,1)$. For
$q'_{\mathcal C}$, sets missing an apex have inner minimum $0$, sets
$A\cup\{v\}$ have the inner minimum of $v$, and $q'_{\mathcal C}(G,1)
\ge0$; the maxima coincide.
\end{proof}

\begin{proposition}\label{prop:apex}
Fix an exact budget $b\ge1$. Under either damage convention, on cells
where the target optimum is positive:
\begin{enumerate}
\item[(a)] $\inf_G\tau^{\pm}_{\PC\to\LCC}(G,b)=2-\sqrt2$ and
$\inf_G\tau^{\pm}_{\LCC\to\PC}(G,b)=2/3$, the infima being over
connected graphs $G$ and approached along explicit families;
\item[(b)] the four directions involving $\NC$ are vanishing at budget
$b$;
\item[(c)] over cells on which every member of the family has positive
optimum, the infima over $G$ of $q_{\{\PC,\NC\}}(G,b)$,
$q_{\{\LCC,\NC\}}(G,b)$ and $q_{\{\PC,\LCC,\NC\}}(G,b)$ are all
$0$, whereas $q_{\{\PC,\LCC\}}(G,b)\ge2/3$ for every such $G$.
\end{enumerate}
\end{proposition}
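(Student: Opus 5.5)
The plan is to reduce everything at $b\ge2$ to the budget-one results through Lemma~\ref{lem:apex}. For $b=1$, parts (a)--(c) are already established: Theorem~\ref{thm:main} and Theorem~\ref{thm:tauminus} give (a) and (b), and Proposition~\ref{prop:qsplit} gives (c). So fix $b\ge2$. In each part, lower bounds come directly from earlier statements: Corollary~\ref{cor:budget} gives $\tau^{\pm}_{\PC\to\LCC}\ge2-\sqrt2$ and $\tau^{\pm}_{\LCC\to\PC}\ge2/3$ at every exact budget and under both conventions. The bound $q_{\{\PC,\LCC\}}(G,b)\ge2/3$ follows by taking $S$ to be any $\LCC$-optimal set, exactly as in Proposition~\ref{prop:qsplit}(a): its $\LCC$ ratio is $1$ and its $\PC$ ratio is at least $2/3$ by the corollary. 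What remains is to show the upper bounds, meaning the infima are approached. For this I will apply the apex lift $J=G\vee K_{b-1}$ to the budget-one witness families. These are $D(k)$ and the clique-connector family of Proposition~\ref{prop:sharp} for (a), $H(k,2)$, $H(k,k/2)$, $D(k)$ and $T(k)$ for (b), and the trees $H_k$ for (c).

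Every witness graph has a cut vertex, so by the last sentence of Lemma~\ref{lem:apex} all survivor optima are positive. The champion sets lift exactly, and the survivor-scale ratios $\tau'^{\pm}$ and $q'$ are preserved. On the survivor scale the conclusion is therefore immediate, given the survivor-scale budget-one values:
\begin{itemize}
\item Theorem~\ref{thm:tauminus} supplies these for (a) and (b): $2(k+1)/(3k)\to2/3$ on $D(k)$, the limit $2-\sqrt2$ on the connector family (whose champions are unique), and vanishing on the $\NC$ cells.
\item The survivor-scale computations in the proof of Proposition~\ref{prop:qsplit}(b) supply them for (c).
\end{itemize}
One point needs checking: on the connector family the survivor ratio $(N-B_a)/(N-a)=(2a-B_a)/a$ still tends to $2-\sqrt2$. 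It does, since $B_a=\sqrt2\,a+O(1)$.

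The inclusive scale is the main obstacle, because adding $b-1$ apexes shifts numerators and denominators by constants. On $J$ at budget $b$ with $n_J=n+b-1$ and $N=n-1$, the shifts are $k_\LCC=b$ and $k_\PC=\binom{n_J}2-\binom N2$. The inclusive ratio of a lifted set is $(x+k)/(y+k)$, where $x/y$ is the survivor ratio on $G$. There are two sub-cases.

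\emph{Fixed $b$, with $\LCC$ or $\NC$ as target.} The shift is a bounded constant ($b$ or $0$) while the denominators $y$ grow with the family parameter, so $(x+b)/(y+b)-x/y\to0$ and the limits are unchanged.

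\emph{Target $\PC$.} Here $k_\PC=(b-1)n_J+\dots$ is of order $bn$. I would compare it with the optimal survivor $\PC$ damage on each family:
\begin{itemize}
\item On $D(k)$, $y\approx k^2$ while $k_\PC\approx 2bk$.
\item On $H(k,\cdot)$, $y$ is quadratic in $k$ while $n$ is linear.
\item On $H_k$, $y\approx k^4$ while $n\approx 2k^2$.
\end{itemize}
In each case $k_\PC/y\to0$, hence $(x+k_\PC)/(y+k_\PC)\to\lim x/y$. In particular the value $2/3$ on $D(k)\vee K_{b-1}$ and the vanishing of $\NC\to\PC$ on $H(k,2)\vee K_{b-1}$ both persist. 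For $q$ on the inclusive scale, each per-objective ratio of a lifted set $A\cup\{v\}$ is the perturbed version of the ratio of $v$ in the same way. Sets missing an apex have survivor damage $0$, so their inclusive ratio is $k/(y+k)\to0$. Consequently $q(J,b)\le q'(G,1)+o(1)$, and the bounds of Proposition~\ref{prop:qsplit}(b) give infimum $0$.

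The delicate step is verifying $k_\PC=o(M'_\PC)$ family by family, together with the uniformity over $S\not\supseteq A$ in the $q$ bound. I would carry both out by writing explicit leading-order expressions in the family parameter.
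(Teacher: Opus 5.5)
Your proposal is correct and follows the paper's proof: Corollary~\ref{cor:budget} supplies the lower bounds, Lemma~\ref{lem:apex} applied to the cut-vertex witness families settles the survivor scale, and on the inclusive scale a constant offset of vanishing relative size leaves the limits unchanged. The only differences are the baseline --- you shift from the survivor values on $G$ (offsets $b$ and $b(n-1)+\binom b2$), where the paper shifts from the inclusive values on $G$ (offsets $b-1$ and $(b-1)n+\binom{b-1}2$) --- and a slip in your leading term ``$(b-1)n_J$'', which should be $bn_J$; the $O(n)$ estimate is unaffected.
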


\begin{proof}
For $b=1$ these are Theorems~\ref{thm:main} and~\ref{thm:tauminus} and
Proposition~\ref{prop:qsplit}. Let $b\ge2$. The lower bounds in (a) and
the bound $q_{\{\PC,\LCC\}}\ge2/3$ (any $\LCC$-champion) are
Corollary~\ref{cor:budget}; the lower bound $0$ in (b) and (c) is
trivial. For the upper bounds, every witness family of
Section~\ref{sec:proofs} and the family $H_k$ of
Proposition~\ref{prop:qsplit} consists of graphs with a cut vertex, so
Lemma~\ref{lem:apex} applies to each member $G_k$ and its lift
$J_k=G_k\vee K_{b-1}$.

\emph{Survivor scale.} By the lemma, $\tau'^{\pm}(J_k,b)=\tau'^{\pm}(G_k,1)$
and $q'(J_k,b)=q'(G_k,1)$ for every $k$; the limits along the lifted
families are the budget-one values $2-\sqrt2$, $2/3$ and $0$, which gives
(a)--(c).

\emph{Inclusive scale.} For $S=A\cup\{v\}$ the inclusive damages of
$(J_k,b)$ and of $(G_k,1)$ differ by the constants
\[
\delta_\LCC=b-1,\qquad
\delta_\PC=\tbinom{n+b-1}2-\tbinom n2=(b-1)n+\tbinom{b-1}2,\qquad
\delta_\NC=0
\]
($n=|V(G_k)|$), and so do the optima; hence for the lifted champions
\[
\tau^{\pm}_{u\to w}(J_k,b)=
\frac{M_w(G_k,1)\,\tau^{\pm}_{u\to w}(G_k,1)+\delta_w}{M_w(G_k,1)+\delta_w},
\]
which has the same limit as $\tau^{\pm}(G_k,1)$ whenever
$\delta_w/M_w(G_k,1)\to0$. For a fixed $b$ this holds in every cell
used: targets $\NC$ have $\delta_\NC=0$ (the directions $\PC\to\NC$ on
$H(k,\lfloor k/2\rfloor)$ and $\LCC\to\NC$ on $D(k)$); target $\LCC$ has
$\delta_\LCC=b-1$ constant while $M_\LCC(G_k,1)\to\infty$ (equal to
$a+1$ on the $2-\sqrt2$ family of order $2a+1$, and of order $k^2$ on the
hub-and-tail family, whose path has $k^2$ vertices); target $\PC$ has
$\delta_\PC=O(n)$ while $M_\PC(G_k,1)=\Theta(n^2)$ (equal to
$(3k^2+7k+4)/2$ on $D(k)$, order $2k+3$, and to $k^2+3k+1$ on $H(k,2)$,
order $2k+2$). For (c), take $\mathcal C\ni\NC$: a set missing an apex has
$\NC$-damage $0$ and inner minimum $0$; for $S=A\cup\{v\}$ each ratio
satisfies $(x_u+\delta_u)/(M_u+\delta_u)\le x_u/M_u+\delta_u/M_u$, so
$q_{\mathcal C}(J_k,b)\le q_{\mathcal C}(H_k,1)+\max_u\delta_u/M_u(H_k,1)$.
On $H_k$ (order $2k^2+k+1$), $M_\PC(H_k,1)\ge\dam_\PC(v_m)=k^4+k^3+2k^2+k$
and $M_\LCC(H_k,1)\ge k^2+1$, so the correction is $O(1/k^2)$ and the
bound tends to $0$.
\end{proof}

\begin{remark}\label{rem:apex}
The statement is for each \emph{fixed} $b$; nothing is claimed for
budgets growing with $n$. The infima are over all connected graphs: a
class that contains a lifted witness family inherits the corresponding
cell, while a class that excludes it is not decided by this argument.
The lift leaves the class of trees as soon as $b\ge2$; if $G$ is a tree
then $J$ has treewidth exactly $b$ ($A$ together with any edge of $G$ is
a $(b+1)$-clique, and adding $A$ to every bag of a width-one
decomposition gives width $b$). Thus for every fixed $b$ the
$\NC$-containing $q$-infima are already $0$ within treewidth at most
$b$; nothing follows from this for the other cells within a class, and
whether trees or bounded-degree graphs admit better constants at
$b\ge2$ remains open. The value $2/3$ for $q_{\{\PC,\LCC\}}$ is a lower
bound, not claimed optimal.
\end{remark}

% =====================================================================
\section{Computational methods}\label{sec:method}

The finite-size separation thresholds in Theorem~\ref{thm:thresholds}
were established by exhaustive enumeration. Additional computations
checked the transfer bounds and witness families on the graph classes
and random samples listed in Table~\ref{tab:comp}. Connected graphs and
trees of each order were generated with \texttt{geng} from the nauty
package, and the connected-graph counts were checked against OEIS
A001349. The separation counts were reproduced by two independently
written implementations, with identical results.

Ratios were compared in exact rational arithmetic. The statements about
the failure probability were derived from exact polynomials with
isolated real roots, so the reported comparisons do not depend on
floating-point arithmetic. Computer search guided the construction of
the witness families in Section~\ref{sec:proofs}, whose properties are
proved analytically.

Table~\ref{tab:comp} summarises the outcome. The code, raw outputs,
witness graphs in graph6 format, and a map from each statement to the
corresponding computation are provided as supplementary material.

\begin{table}[htb]
\centering\footnotesize
\setlength{\tabcolsep}{4pt}
\begin{tabular}{@{}>{\raggedright\arraybackslash}p{.30\textwidth}
>{\raggedright\arraybackslash}p{.27\textwidth}
>{\raggedright\arraybackslash}p{.37\textwidth}@{}}
\toprule
Statement & Graphs examined & Outcome\\
\midrule
Thm.~\ref{thm:thresholds}(i)--(iii): pairwise separation & all connected graphs, $4\le n\le9$ & first disjoint pairs at $n=7$ ($\LCC/\NC$: 8 of 853), $n=8$ ($\PC/\NC$: 48 of 11{,}117), $n=9$ ($\PC/\LCC$: 129 of 261{,}080); witnesses as in Theorem~\ref{thm:thresholds}\\
Thm.~\ref{thm:thresholds}(iv): triple separation & all connected graphs, $n\le10$; trees, $n\in\{11,12,14\}$ & none for $n\le10$ (11{,}716{,}571 graphs at $n=10$); trees: 3 of 235 ($n=11$), 3 of 551 ($n=12$), 34 of 3159 ($n=14$)\\
Thm.~\ref{thm:regime}: regime reversal & 29 connected graphs, $n\le5$; the graph $R_6$ & no leadership change for $n\le5$; unique root $p^*=3-\sqrt5$\\
Thms.~\ref{thm:main}, \ref{thm:tauminus}: ratios at $b=1$, both conventions & all connected graphs, $n\le9$; 24{,}934 random connected graphs, $10\le n\le60$ (132 with non-trivial ties) & no pessimistic ratio below $2-\sqrt2$ or $2/3$; minima at $n=9$: $\PC\to\LCC$ $4/5$ (inclusive), $3/4$ (survivor), at $D(3)$; $\LCC\to\PC$ $23/26$, $5/6$, at a unicyclic graph; optimistic minima $12/13$ ($\LCC\to\PC$) and $1/3$ ($\LCC\to\NC$), both at $D(3)$\\
Cor.~\ref{cor:budget}: ratios at $b\le3$, both conventions, two engines & all connected graphs, $n\le8$; 916 random connected graphs, $9\le n\le30$ & no pessimistic ratio below the constants; smallest at $b=2$: $2/3$ ($\PC\to\LCC$) and $3/4$ ($\LCC\to\PC$), survivor scale, $n=8$\\
Prop.~\ref{prop:sharp}(2): champion uniqueness on the $2-\sqrt2$ family & $a\in\{8,15,30,60,120\}$ & confirmed; ratios $2/3,\ 11/16,\ 19/31,\ 37/61,\ 72/121$\\
Prop.~\ref{prop:qsplit}: $q$-bounds on $H_k$ & $k\le30$, both conventions & bounds hold with exact fractions\\
Remark~\ref{rem:portfolio}: coverage by two tests & all connected graphs $n\le7$, $b\le3$; $H_k$, $k\le12$ & coverage $\ge(2/3,1,1)$ in every case; $c_{2/3}(H_k,1)=2$\\
Lemma~\ref{lem:apex}: apex lift & all connected graphs $n\le7$ at $b\in\{2,3\}$, $n\le6$ at $b=4$; all witness families & champion sets, survivor ratios and $q$ preserved as stated\\
Nestedness ratio $\beta$ (Section~\ref{sec:nested}) & all connected graphs, $4\le n\le7$ & minimum $3/4$ ($n=7$); first non-nested case $P_5$\\
\bottomrule
\end{tabular}
\caption{Summary of the computations. ``Both conventions'' refers to the inclusive damages of Definition~\ref{def:objectives} and the survivor-based damages of Remark~\ref{rem:normalisation}.}
\label{tab:comp}
\end{table}

% =====================================================================
\section{Open problems}\label{sec:outlook}

The main open question concerns budgets $b\ge2$ within restricted
classes. Corollary~\ref{cor:budget} gives the constants $2-\sqrt2$ and
$2/3$ as lower bounds at every budget, and Section~\ref{sec:apex} shows
that over all connected graphs they cannot be improved; but the apex
construction leaves the class of trees and raises the treewidth, so it
says nothing about whether the constants improve on trees or on
bounded-degree graphs, or whether the $\NC$ directions remain vanishing
there. The same question arises for budgets proportional to $n$, which
no fixed-$b$ statement covers.

Two further questions seem natural. The three objectives are the
projections $\bigl(\sum_i\binom{c_i}2,\,c_1,\,m\bigr)$ of the component
profile; which other profile functionals admit uniform transfer
constants, and under which normalisations beyond the two settled in
Theorem~\ref{thm:tauminus}? And what does the full set of regime
breakpoints of a graph look like --- how many are there, of what
algebraic degree, and how do they depend on the topology? We have no
answers beyond the results reported here.

% =====================================================================
\FloatBarrier
\section*{Acknowledgements}
Generative AI tools, including GPT-based assistants, were used during the preparation of this manuscript to explore mathematical ideas and refine the exposition. The author takes full responsibility for all mathematical claims, proofs, computational results, and conclusions.

\bibliographystyle{plainurl}
\bibliography{refs}

\end{document}